\definecolor{niceblue}{HTML}{156ABC}
\tikzstyle{vecArrow} = [thick, decoration={markings,mark=at position
\tikzstyle{innerWhite} = [semithick, white,line width=1.4pt, shorten >= 4.5pt]
\newcommand{\blind}{0}
\newcommand{\dpl}{\,d\bar{\boldP}^\lambda}
\newcommand{\dpi}{\,d\bar{\pi}}
\newenvironment{bsmallmatrix}
{\bigl[\begin{smallmatrix}}
	{\end{smallmatrix}\bigr]}
\begin{document}

	\def\spacingset#1{\renewcommand{\baselinestretch}%
		{#1}\small\normalsize} \spacingset{1}

	\if0\blind
	{
		\title{\bf Optimal Transport Weights for Causal Inference}
		\date{}
		\author{Eric A. Dunipace\thanks{edunipace@mail.harvard.edu}\\
			\hspace{.2cm}\\
			David Geffen School of Medicine at UCLA\\
			Department of Biostatistics, Harvard T.H. Chan School of Public Health
		}
		\maketitle
	} \fi

	\if1\blind
	{
		\bigskip
		\bigskip
		\bigskip
		\begin{center}
			{\LARGE\bf Optimal Transport Weights for Causal Inference}
		\end{center}
		\medskip
	} \fi
	
	\bigskip

	\begin{abstract}
		Imbalance in covariate distributions leads to biased estimates of causal effects. Weighting methods attempt to correct this imbalance but rely on specifying models for the treatment assignment mechanism, which is unknown in observational studies. This leaves researchers to choose the proper weighting method and the appropriate covariate functions for these models without knowing the correct combination to achieve distributional balance. In response to these difficulties, we propose a nonparametric generalization of several other weighting schemes found in the literature: Causal Optimal Transport. This new method directly targets distributional balance by minimizing optimal transport distances between treatment and control groups or, more generally, between any source and target population. Our approach is semiparametrically efficient and model-free but can also incorporate moments or any other important functions of covariates that a researcher desires to balance. Moreover, our method can provide nonparametric estimate the conditional mean outcome function and we give rates for the convergence of this estimator. Moreover, we show how this method can provide nonparametric imputations of the missing potential outcomes and give rates of convergence for this estimator. We find that Causal Optimal Transport outperforms competitor methods when both the propensity score and outcome models are misspecified, indicating it is a robust alternative to common weighting methods. Finally, we demonstrate the utility of our method in an external control trial examining the effect of misoprostol versus oxytocin for the treatment of post-partum hemorrhage.
	\end{abstract}
	
	\noindent%
	{\it Keywords:} balancing weights, semiparametric efficiency, IPW, Sinkhorn divergence
	\vfill
	
	\doublespacing
	
	\section{Introduction}
	Inverse propensity score weighting (IPW) uses the predicted probabilities of treatment to estimate causal effects. Under the correct model, these weights will lead to distributional balance between treatment groups and thereby to unbiased effect estimates. Unfortunately, adequate distributional balance is a challenge to determine, so researchers often assess performance by measuring the balance of observed covariate functions \citep{li_balancing_2018}. 
	
	Rather than using weights that indirectly balance selected covariate functions, other authors construct weights that achieve such balance by design 
	\citep{Hainmueller2012, Imai2014, Zubizarreta2015}; however, the particular aspect of the covariates that will achieve distributional balance, especially in finite samples, is usually unknown.
	Weights built using reproducing kernel Hilbert spaces (RKHS) may get around this difficulty because they theoretically balance all covariate functions  \citep{Li2021}. 
	Nonetheless, RKHS weights necessitate the tuning of unknown hyperparameters without an obvious metric to assess performance. Some authors tune the RKHS weights such that they balance predictions from an outcome model \citep{Kallus2016,Kallus2018}, 
	meaning the performance of the weights in turn depends on the performance of these outcome models.
	However, all of these balancing methods fail to capture something fundamental about the covariates: the joint distribution.
	%
	%
	
	Ultimately, if the distributions are the same between treatment groups, then all functions of the covariates will be balanced, leading to unbiased treatment effect estimates.
	But researchers may have certain covariate functions that they believe are important \textit{a priori}. As such, researchers may not want to trade off balance on these quantities for better overall distributional balance. 
	Fortunately, we do not have to choose between balancing distributions or covariate functions---we can do both. 
	
	Causal Optimal Transport (COT) is our proposed method that
	balances the joint distribution and any other selected covariate functions of interest in one framework.
	As we document, our method is semiparametrically efficient and performs well in simulation studies compared to competing methods---especially when the propensity score and outcome models are misspecified. 
	We also demonstrate that several methods in the literature are actually special cases of COT, meaning our framework can be seen as an interpolation between several seemingly unconnected methods. 
	Finally, we show how optimal transport methods can nonparametrically impute the missing potential outcomes by estimating the conditional mean outcome function at a $n^{-2/(2d' + 9)}$ rate, where $d' = \lceil 5d/4\rceil$ and $d$ is the dimension of the covariate space.
	\if0\blind{Code to implement the methods discussed in this work is available in the new \texttt{R} package \texttt{causalOT}, found at \url{http://www.github.com/ericdunipace/causalOT}.}\fi
	
	The paper proceeds as follows: in Section \ref{sec:setup}, we describe the setting and assumptions necessary for identification. Then we introduce optimal transport and COT in Section \ref{sec:otsec}. In Section \ref{sec:compare}, we demonstrate how COT unifies several existing methods, and in Section \ref{sec:sims}, we provide simulation results demonstrating the utility of COT. Section \ref{sec:case} presents a case study utilizing our method in a study of post-partum hemorrhage. Finally, we offer our concluding remarks in Section \ref{sec:summary}.
	
	\section{Setup}\label{sec:setup}

	\subsection{The potential outcomes framework}
	We adopt the potential outcomes framework of Neyman and Rubin \citep{Splawa-Neyman1923, Rubin1974}. Assume that we have an independent, identically distributed (iid) sample of $n \in \Nat$ units from some population. Let $Z$ be a binary variable that denotes receiving either a treatment ($Z = 1$) or control ($Z = 0$) condition. $Y(0)$ and $Y(1)$ are the potential outcomes, and $Y = Z \cdot Y(1) + (1-Z) \cdot Y(0)$ is the observed outcome defined on a space $\mcY \subseteq \R$. The confounders are $X \in \mcX \subseteq \R^d$. We will assume we have $n_0 \in \Nat$ control units and $n_1 \in \Nat $ treated units giving $n = n_0 + n_1$ total observations from this sample. Denote $\bolda_z = \sum_{i: Z_i = z}^n \delta_{x_i} a_i$ as the empirical distribution for treatment $Z = z$ and $\bolda = \sum_i^n \delta_{x_i} a_i$ as the empirical distribution for the full sample. Let $\alpha_z$ and $\alpha$ be the corresponding population distributions.

	Finally, we assume the space $\mcX$ has a distance metric between observations, $d_{\mcX}(x_i, x_j) \in \R_+$. We will define a generic cost function as $c(x_i, x_j) = d_{\mcX}(x_i, x_j) ^p$ with $p \geq 1$. As an example, if $d_\mcX$ is the Euclidean distance and $p = 2$, then $c$ is the squared-Euclidean distance. 
	From this function, 
	we then construct a pairwise cost matrix $\boldC \in \R_+^{n \times m}$ between each unit $i$ and $j$:
	$\boldC_{ij} = c(x_i, x_j).$
	
	\subsection{Causal estimands}
	
	There are several potential causal contrasts of interest over these populations but we focus on the sample average treatment effect (ATE):
	\begin{equation}
		\tau = \E\left\{Y(1) - Y(0)\right\},
		\label{eq:ate}
	\end{equation}
	Unfortunately, we cannot estimate Eq. \eqref{eq:ate} since one of the potential outcomes is missing for each individual. 
	
	Instead, we need to use the information in the source population to get valid treatment effect estimates. That is, we desire $\E\left\{Y(z) \indicator(Z = z) \cdot w\right\} = \E\left\{Y(z)\right\},$ for some function $w$.
	A common way to do this is to use an importance sampling weight $w = \frac{d\alpha}{d\alpha_z}$, which is also known as the Radon-Nikodym derivative of $\alpha$ with respect to $\alpha_z$.
	
	With a known $w$, we estimate Eq. \eqref{eq:ate} as
	\begin{equation}
		\hat{\tau} = \sum_{i} w_i Y_i Z_i - \sum_{i} w_i Y_i (1 - Z_i ),
		\label{eq:hajek_init}
	\end{equation}
	and such that the weights sum to one in the treatment and control groups: $\sum_i w_i Z_i  = \sum_i w_i (1 -Z_i)  = 1$. 
	
	\subsection{Identifying assumptions}
	To identify these estimators, we need several assumptions, which we formalize below. 
	
	\begin{assumption}[] 
		Stable unit treatment value assumption, \citep{Rubin1986}:\\ $Y_i(Z_1, Z_2,..., Z_i,...,Z_n) = Y_i(Z_i)$
		and  $Y_i(Z_i) = Y_i(z) $ if  $ Z_i = z$.
		\label{assum:sutva}
	\end{assumption}
	\begin{assumption}[]
		Strong ignorability of treatment assignment, \citep{Rosenbaum1983}:\\ $Y(0), Y(1) \independent Z \given X, S = 1$ and $0 < P(Z=1 \given X, S = 1) < 1$.
		\label{assum:si}
	\end{assumption}
	These standard assumptions allow us to use the observed data to estimate the desired treatment effects in the target sample. Assumption \ref{assum:sutva} allows us to use the observed outcomes and not consider interference between units, while Assumption \ref{assum:si} gives common support between treatment populations. 
	With these conditions, we now turn to optimal transport and COT.

	\section{Causal Optimal Transport} \label{sec:otsec}
	
	\subsection{General properties of optimal transport}
	The popularity of optimal transport methods have exploded in recent years thanks to several recent theoretical and methodological advances \citep{Cuturi2013,Peyre2019}, but the field dates back centuries. 
	We frame our discussion in terms of empirical samples $\bolda_z$ and $\bolda$ to align with the rest of the paper but these quantities can be arbitrary samples for general optimal transport problems.
	
	The original optimal transport problem formulated by \citet{Monge1781}  involves finding optimal maps between distributions. Define such a map as a function $T: \mcX \mapsto \mcX$ and such that $\int_{\mcX } g(x) d\alpha_z = \int_{\mcX} g(T(x)) d\alpha$ for all measurable functions $g$. 
	We denote the corresponding push-forward operator from $\alpha$ to $\alpha_z$ as $T_\# \alpha = \alpha_z$. The \citeauthor{Monge1781} formulation of the optimal transport problem is then 
	\begin{equation}
		\inf_T \quad   \sum_{i} c\{T(x_i),x_i\} a_j,
		\label{eq:monge_prob}
	\end{equation}
	where $T_\# \alpha = \alpha_z$. 
	Unfortunately, this problem can be intractable to solve in practice since the mapping must be injective.
	
	To alleviate this issue, the \citet{Kantorovich1942} formulation instead considers probabilistic assignments between distributions $\bolda_z$ and $\bolda$ using a transport matrix $\boldP$:
	\begin{equation}
		\ot{\bolda_z}{\bolda}{}	 = \min_{\mathbf{P} \in \boldU(\bolda_z, \bolda)} \quad   \sum_{i:Z_i = z,\,j} \boldC_{ij}\boldP_{ij}    
		\label{eq:ot}
	\end{equation}
	where $\boldU(\bolda_z, \bolda)$ is the set of joint distributions with margins $\bolda_z$ and $\bolda$.  This metric is a proper distance that obeys the triangle inequality and  metrizes the convergence in distribution, \textit{i.e.} $\ot{\bolda_z}{\bolda}{} = 0 \iff \bolda_z = \bolda$ (Proposition 2.3, \citealp{Peyre2019}).
	When $c(x, x') = d_{\mcX}(x, x')^p$, as is the case for our setting, then Eq. \eqref{eq:ot} is also known as the $p\,$-Wasserstein distance.
	Unfortunately, this problem is known to have a decaying convergence with increasing dimension \citep{Weed2017} and also to suffer from a large computational complexity \citep{Cuturi2013}.		
	
	Conveniently, regularized optimal transport offers improved rates of asymptotic convergence \citep{Genevay2018, Mena2019}  and computational speed \citep{Altschuler2017} by adding a convex penalty to the objective function:
	\begin{equation}
		\ot{\bolda_z}{\bolda}{\lambda} = \min_{\mathbf{P} \in \boldU(\bolda_z, \bolda)} \sum_{i: Z_i = z,\,j} \boldC_{ij}\boldP_{ij} +  H_\lambda(\boldP_{ij}).
		\label{eq:ot_reg}
	\end{equation}
	Common penalties for $H_\lambda$ include an entropy penalty, $\lambda \boldP_{ij} \log \boldP_{ij}$ \citep{Cuturi2013}, or an $L_2$ penalty, $\frac{\lambda}{2}\boldP_{ij}^2$ \citep{Blondel2018}.
	The solutions to this problem converge to the solutions from Eq. \eqref{eq:ot} as $\lambda \to 0$, while as $\lambda \to \infty$, the solutions put equal weight on every entry in  $\boldP$.
	
	To adjust for the fact that $\ot{\bolda_z}{\bolda}{\lambda} \neq 0$, \citet{genevay_learning_2018} introduced the Sinkhorn divergence for entropy penalized optimal transport: 
	\begin{equation}
		S_\lambda (\bolda_z, \bolda) = \ot{\bolda_z}{\bolda}{\lambda} - \frac{1}{2} \ot{\bolda_z}{\bolda_z}{\lambda}
		- \frac{1}{2} \ot{\bolda}{\bolda}{\lambda}
		\label{eq:sink_div}
	\end{equation}
	This has the advantage that $S_\lambda (\bolda_z, \bolda) = 0 \iff \bolda_z = \bolda$ \citep{Feydy2018}, while retaining the computational and theoretical advantages of regularized optimal transport.

	Finally, we can still use Eqs. \eqref{eq:ot} or \eqref{eq:ot_reg} to construct a map as in Eq. \eqref{eq:monge_prob}.
	In finite samples, this function can be estimated from the Kantorovich formulation as
	\begin{equation}
		T_{\bolda \mapsto \bolda_z}(j) = \argmin_\upsilon \sum_{i: Z_i = z} c(X_i, \upsilon)\boldP_{ij}.
		\label{eq:map}
	\end{equation}
	This mapping is alternatively known as the barycentric projection \citep{Peyre2019}.
	For the squared-Euclidean cost, this map equals $\frac{1}{a_j}\sum_{i: Z_i = z} \boldP_{ij} X_i$,
	or the weighted mean of the observations in the sample who received treatment $Z=z$. For an $L_1$ cost, $T_{\bolda \mapsto \bolda_z}$ is the weighted median of the corresponding $X_i$. 
	Under an $L_2$ cost, this map will also converge to the optimal Monge map provided one of underlying measures is continuous \citep{Ambrosio2005}. With these general properties established, we now turn to our proposed method.
	
	\subsection{Problem formulation}\label{sec:otw_sec}
	We define the COT problem as
	\begin{align}
		\cotobj{\bolda}{\lambda} &= \min_{\boldw \in \Delta_n} \; S_\lambda(\boldw_1,\bolda) + S_\lambda(\boldw_0,\bolda),
		\label{eq:COT_div}
	\end{align}
	where $\boldw_z$ is the empirical measure $\sum_{i: Z_i = z} \delta_{x_i} w_i $, $\Delta_n$ is the simplex with $n$ vertices, and $S_\lambda$ is defined in \eqref{eq:sink_div}.  The COT weights will then be the weights that minimize $\cotobj{\bolda}{\lambda}$. In a slight abuse of notation, we have the following marginal distribution:
	\begin{equation}
		\boldw_{\text{COT}} = \argmin_{\boldw \in \Delta_n} \quad \cotobj{\bolda}{\lambda}.
		\label{eq:margOT}
	\end{equation}
	
	In addition to seeking distributional balance, a researcher may also know a set of functions that he or she thinks are important to balance \textit{a priori} for valid causal estimates. These functions may include a hypothesized outcome model or the moments of the covariates. Define $B_k(\cdot): \mcX \mapsto \R$ for $k \in \{1,...,K\}$ as these $K$ functions of interest. We can then add an additional constraint to the problem in Eq. \eqref{eq:COT_div} to approximately balance these important functions between samples:
	\begin{equation}
		\left|\sum_{i: Z_i = z} B_k(X_i) w_i -  \frac{1}{m} \sum_{j} B_k(X_{j}) \right|  \leq \delta_k, \, \, \forall k \in \{1,...,K\}. \label{eq:cot_bf}
	\end{equation}
	
	Of course, other formulations of the problem are possible and we detail some of them in Appendix \ref{sec:other_ot} of the Supplementary Materials. However, we find that in practice the formulation in Eq. \eqref{eq:COT_div} has the best performance in terms of bias and variance.

	\subsection{Convergence}
	We now discuss the convergence of our weights to the distribution of interest. First, we define the importance sampling weights as $\breve{w}_i^\star = \frac{d\alpha(X_i)}{d\alpha_z(X_i)}$ and define the self-normalized importance sampling weights as $w_i^\star =  \frac{1}{n}\breve{w}_i^\star/\sum_{i} \frac{1}{n}\breve{w}_i^\star.$
	\noindent In our setting, $\breve{w}_i^\star =1/$ $\p(Z_i = z \given X_i) $.
	Further, let $\delta_{n}$ be the smallest value of the balancing function constraints at which the importance sampling weights satisfy the condition in Eq. \eqref{eq:cot_bf} for sample size $n$. We also rely on some additional assumptions to prove the convergence of the COT weights.
	
		%
	\begin{assumption}
		\label{assum:conv}
		$\exists x_0 \in \mcX: \int_\mcX c(x_0, x)d\alpha < \infty$ and $\E_\alpha|B(X)| < \infty$ with $\|\delta_{n}\|^2 = o_p(\|\delta\|^2)$.
	\end{assumption}
	\begin{assumption}
		$c(\cdot,\cdot) $ is in $\mcC^\infty$ and is $L$-Lipschitz and either 1) $ \frac{1}{\lambda^{\lceil 5 d/4 \rceil + 2}} \frac{1}{\sqrt{n}}  = o_p(1)$ and $\alpha_z$ and $\alpha$ are $\sigma^2$-subgaussian with $c = \|\cdot\|_2^2$  or 2) $\frac{\exp(\|\boldC\|_\infty / \lambda)}{\lambda^{\lceil d/2 \rceil}} \frac{1}{\sqrt{n}}  = o_p(1)$ and $\mcX \subset \R^d$.
		\label{assum:conv_rootn}
	\end{assumption}
	These assumptions enforce some regularity on the constituent parts of COT. First, the cost function must exist and be continuously differentiable. Second, the measures are either subgaussian or defined on subsets of the real numbers. Third, the penalty term $\lambda$ cannot go to zero too quickly if at all. And fourth, if using balancing constraints, there needs to be a value at which the importance sampling weights satisfy the constraints. With these assumptions, we have our first theorem.
	
	\begin{theorem}\label{thm:conv_cot}
		If Assumptions \ref{assum:si}--\ref{assum:conv_rootn} hold,
		then as $n \to \infty$,
		\[
		\boldw_{\text{COT}} \rightharpoonup \alpha,
		\]
		where $\boldw_{\text{COT}}$ is defined in Eq. \eqref{eq:margOT}. Further, 
		\[\E\left\{ \ot{\boldw_{\text{COT}}}{\bolda}{\lambda} - \ot{\alpha}{\alpha}{\lambda}  \right\} = \mcO \left(\frac{1}{\sqrt{n}}\right).\]
	\end{theorem}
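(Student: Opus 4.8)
The plan is to prove the two conclusions in sequence, using the self-normalized importance sampling weights as a feasible competitor that drives the COT objective to zero, and the sample-complexity bounds for entropic optimal transport as the quantitative engine. Throughout, write $\boldw_{z,\text{COT}}$ for the group-$z$ component of the minimizer.

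First I would establish weak convergence by bounding the optimum from above. By Assumption~\ref{assum:conv}, $\|\delta_n\|^2 = o_p(\|\delta\|^2)$ guarantees that the importance sampling weights $w^\star$ eventually satisfy the balancing constraints in Eq.~\eqref{eq:cot_bf}, so $w^\star$ is feasible for large $n$. A standard self-normalized importance-sampling argument, using $\breve{w}_i^\star = 1/\p(Z_i = z \given X_i)$ together with the overlap half of Assumption~\ref{assum:si}, gives $\boldw_z^\star \rightharpoonup \alpha$ for each $z$; since $\bolda \rightharpoonup \alpha$ and $S_\lambda$ is continuous for weak convergence plus convergence of the moments controlled in Assumptions~\ref{assum:conv}--\ref{assum:conv_rootn} \citep{Feydy2018}, we obtain $S_\lambda(\boldw_z^\star,\bolda)\to S_\lambda(\alpha,\alpha)=0$. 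By optimality, $\cotobj{\bolda}{\lambda}\le S_\lambda(\boldw_1^\star,\bolda)+S_\lambda(\boldw_0^\star,\bolda)\to 0$, and since each summand of the objective is nonnegative, $S_\lambda(\boldw_{z,\text{COT}},\bolda)\to 0$ for $z\in\{0,1\}$. To pass from this to $\boldw_{z,\text{COT}}\rightharpoonup\alpha$ I would argue by tightness: the weighted measures are supported on draws from $\alpha$, which is tight under Assumption~\ref{assum:conv}, so every subsequence has a weakly convergent further subsequence with some limit $\mu$; lower semicontinuity of $S_\lambda$ gives $S_\lambda(\mu,\alpha)\le\liminf S_\lambda(\boldw_{z,\text{COT}},\bolda)=0$, and positive-definiteness forces $\mu=\alpha$. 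As every subsequential limit is $\alpha$, the whole sequence converges, whence $\boldw_{\text{COT}}\rightharpoonup\alpha$.

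For the rate I would pass back from the Sinkhorn divergence to the regularized cost via $\ot{\mu}{\nu}{\lambda}=S_\lambda(\mu,\nu)+\tfrac12\ot{\mu}{\mu}{\lambda}+\tfrac12\ot{\nu}{\nu}{\lambda}$ and $S_\lambda(\alpha,\alpha)=0$, yielding
\[
\ot{\boldw_{\text{COT}}}{\bolda}{\lambda}-\ot{\alpha}{\alpha}{\lambda}
= S_\lambda(\boldw_{\text{COT}},\bolda)
+\tfrac12\!\left[\ot{\boldw_{\text{COT}}}{\boldw_{\text{COT}}}{\lambda}-\ot{\alpha}{\alpha}{\lambda}\right]
+\tfrac12\!\left[\ot{\bolda}{\bolda}{\lambda}-\ot{\alpha}{\alpha}{\lambda}\right].
\]
The last bracket is a one-sample entropic-OT discrepancy, so $\E\bigl|\ot{\bolda}{\bolda}{\lambda}-\ot{\alpha}{\alpha}{\lambda}\bigr|=\mcO(n^{-1/2})$ by \citet{Genevay2018} and \citet{Mena2019}, whose $\lambda$-dependent constants are exactly what Assumption~\ref{assum:conv_rootn} keeps bounded relative to $\sqrt{n}$ (subgaussian case~1, or bounded-domain case~2). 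The first term is bounded in expectation by the optimal objective value, which the comparison with the feasible $w^\star$ above controls at the same $n^{-1/2}$ rate once the sample-complexity estimates are applied to the importance-reweighted empirical measure $\boldw_z^\star$, legitimate because overlap makes $\breve{w}^\star$ bounded and preserves the subgaussian/bounded structure. Summing the $\mcO(n^{-1/2})$ contributions then gives the claim.

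The main obstacle is the middle self-transport term $\ot{\boldw_{\text{COT}}}{\boldw_{\text{COT}}}{\lambda}-\ot{\alpha}{\alpha}{\lambda}$: unlike the other pieces, $\boldw_{\text{COT}}$ is the output of an optimization, not a plain empirical measure, so the off-the-shelf sample-complexity bounds do not apply directly. Controlling it at rate $n^{-1/2}$ requires quantifying how fast the optimized weights approach $\alpha$, either through a Lipschitz/stability bound for $\ot{\cdot}{\cdot}{\lambda}$ that converts the $n^{-1/2}$ bound on $S_\lambda(\boldw_{z,\text{COT}},\bolda)$ into control of the self-cost, or through a direct concentration argument for reweighted empirical measures, while tracking the $\lambda$-dependence so that it stays absorbed by Assumption~\ref{assum:conv_rootn}.
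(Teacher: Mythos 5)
Your proposal takes essentially the same route as the paper's proof: feasibility of the self-normalized importance-sampling weights $\boldw^\star$ for large $n$, the comparison $0 \le S_\lambda(\boldw_{\text{COT}}, \bolda) \le S_\lambda(\boldw^\star, \bolda)$ to get weak convergence (where you run an explicit tightness/subsequence/lower-semicontinuity argument, the paper simply invokes that $S_\lambda$ metrizes convergence in measure), and an algebraically identical decomposition of $\ot{\boldw_{\text{COT}}}{\bolda}{\lambda} - \ot{\alpha}{\alpha}{\lambda}$ whose one-sample and competitor terms are controlled at rate $n^{-1/2}$ via \citet{Genevay2018} and \citet{Mena2019}. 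The self-transport term $\ot{\boldw_{\text{COT}}}{\boldw_{\text{COT}}}{\lambda} - \ot{\alpha}{\alpha}{\lambda}$ that you flag as the main obstacle is exactly the step the paper dispatches informally, asserting only that for $n$ large enough the sandwiched quantities are ``approximately equal'' to $\frac{1}{2}\{\ot{\boldw_{\text{COT}}}{\bolda}{\lambda} - \ot{\alpha}{\alpha}{\lambda}\}$, so your attempt is no less complete than the published argument and is more candid about where a genuine stability bound for the optimized measure would be needed.
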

	%
	This theorem says that the COT weights converge to the distribution of the target sample at a $\sqrt{n}$-rate, which also has implications for the efficiency of estimators based on COT, as we will see in the next section. 
	A proof of this theorem is provided in Appendix  \ref{sec:conv_proof} of the Supplementary Materials.

	Finally, Theorem \ref{thm:conv_cot} also gives the following corollary.
	\begin{corollary} 
		\label{coro:cot_conv_is}
		As $n \to \infty$,
		\[\lim_{n \to \infty} \boldw_{\text{COT}} \stackrel{\text{a.s.}}{=} \lim_{n \to \infty} \boldw^\star.\]
	\end{corollary}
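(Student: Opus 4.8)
The plan is to reduce the corollary to a single observation: both weighted empirical measures, $\boldw_{\text{COT}}$ and $\boldw^\star = \sum_{i: Z_i = z} \delta_{x_i} w_i^\star$, converge almost surely to the \emph{same deterministic} target $\alpha$. Once that is in hand, uniqueness of the weak limit forces $\lim_n \boldw_{\text{COT}} = \alpha = \lim_n \boldw^\star$ almost surely, which is exactly the claim. The convergence of $\boldw_{\text{COT}}$ is already delivered by Theorem \ref{thm:conv_cot}, so the substantive work is to establish the analogous statement for the self-normalized importance sampling weights and then to reconcile the two modes of convergence.

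For the importance sampling side, I would apply the strong law of large numbers to the numerator and denominator of the self-normalized weights separately. Writing $\pi_z = \p(Z = z)$, for any bounded continuous $g : \mcX \mapsto \R$,
\[
\sum_{i: Z_i = z} w_i^\star\, g(X_i) = \frac{\tfrac{1}{n}\sum_{i: Z_i = z} \breve{w}_i^\star\, g(X_i)}{\tfrac{1}{n}\sum_{i: Z_i = z} \breve{w}_i^\star}.
\]
Since the sample is iid and $\breve{w}_i^\star = d\alpha(X_i)/d\alpha_z(X_i)$, the SLLN gives $\tfrac{1}{n}\sum_{i: Z_i = z} \breve{w}_i^\star\, g(X_i) \to \pi_z\, \E_{\alpha_z}\!\left[\tfrac{d\alpha}{d\alpha_z}\, g\right] = \pi_z\, \E_\alpha[g]$ and $\tfrac{1}{n}\sum_{i: Z_i = z} \breve{w}_i^\star \to \pi_z$ almost surely, where the change of measure uses $\int g\,\tfrac{d\alpha}{d\alpha_z}\,d\alpha_z = \int g\, d\alpha$. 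The integrability needed to invoke the SLLN and the finiteness of the Radon--Nikodym derivative follow from the overlap and moment conditions in Assumptions \ref{assum:si}--\ref{assum:conv}. Taking the ratio, the factors $\pi_z$ cancel and $\sum_{i: Z_i = z} w_i^\star\, g(X_i) \to \E_\alpha[g]$ almost surely; as this holds for every bounded continuous $g$, I conclude $\boldw^\star \rightharpoonup \alpha$ almost surely.

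Combining this with $\boldw_{\text{COT}} \rightharpoonup \alpha$ from Theorem \ref{thm:conv_cot} and invoking uniqueness of the weak limit then yields the corollary. The main obstacle is aligning the two notions of convergence: Theorem \ref{thm:conv_cot} controls the \emph{expected} Sinkhorn gap at rate $n^{-1/2}$, and a Markov bound alone gives a non-summable tail, so upgrading that statement to an almost sure one requires passing through a subsequence (e.g.\ $n_k = k^2$, along which $\sum_k n_k^{-1/2} < \infty$ permits a Borel--Cantelli argument) and then controlling the intermediate indices using the continuity of $S_\lambda$ together with the identity $S_\lambda(\mu,\nu) = 0 \iff \mu = \nu$. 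Once both sequences are shown to converge weakly to the same deterministic measure $\alpha$ almost surely, the equality of their limits is immediate.
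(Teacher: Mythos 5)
Your overall architecture coincides with the paper's: show that both weighted measures converge weakly to the same deterministic limit $\alpha$ and conclude by uniqueness of the limit, which the paper phrases as almost-sure uniqueness of the Radon--Nikodym derivative. Your SLLN ratio argument for $\boldw^\star$ is exactly the content of the paper's Lemma \ref{lemm:is_conv}, which invokes Theorem 9.2 of \citet{Owen2013h} on self-normalized importance sampling---that theorem is precisely the numerator/denominator strong law you write out, so this part is correct and matches the paper.

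Where you diverge is the final paragraph, and there your argument is both unnecessary and, as stated, incomplete. It is unnecessary because the weak convergence of $\boldw_{\text{COT}}$ in Theorem \ref{thm:conv_cot} does not rest on the in-expectation rate $\E\{\ot{\boldw_{\text{COT}}}{\bolda}{\lambda} - \ot{\alpha}{\alpha}{\lambda}\} = \mcO(n^{-1/2})$ (that is a separate, second claim of the theorem); it comes from the pathwise sandwich $0 \leq S_\lambda(\boldw_{\text{COT}}, \bolda) \leq S_\lambda(\boldw^\star, \bolda)$, valid because $\boldw_{\text{COT}}$ minimizes the objective over all feasible weights, which include $\boldw^\star$ for $n$ large. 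Since Lemma \ref{lemm:is_conv} gives $\boldw^\star \rightharpoonup \alpha$ almost surely, the squeeze holds on an event of probability one and delivers almost-sure weak convergence of $\boldw_{\text{COT}}$ directly, with no subsequence extraction. It is incomplete because your Borel--Cantelli step along $n_k = k^2$ controls only the subsequence, and ``continuity of $S_\lambda$'' does not by itself control the random objective at intermediate indices: the optimizer $\boldw_{\text{COT}}$ changes with $n$, so relating $S_\lambda(\boldw_{\text{COT}}, \bolda)$ at sample size $n \in (n_k, n_{k+1})$ to its values at the subsequence points would require a monotonicity property or maximal inequality that you neither state nor prove, and which is not obvious here. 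Replacing that paragraph with the sandwich argument makes your proof complete and essentially identical to the paper's.
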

	\noindent The corollary follows as a consequence of the Radon-Nikodym Theorem and the fact that the Radon-Nikodym derivatives are almost surely unique.
	%
	%
	%
	
	\subsection{Statistical Inference}
	
	For statistical inference, we turn our attention to the asymptotic distribution of Eq. \eqref{eq:hajek_init} and its variance.
	We assume the following conditions also hold.
	
	\begin{assumption}
		\label{assum:outcome_reg}
		$\E| Y - \mu_z(X)| < \infty$ for $\mu_z(X) \defn \E\{Y(z) \given X\}$, $\E(Y^2) < \infty$, and either $S_\lambda (\boldw_{\text{COT}}, \bolda) = o_p(n^{-1/2})$ or, for basis function balancing, $\| \delta \|_2^2 =  o_p(n^{-1/2})$ with $\mu_z(X) \subseteq B(X)\trans \gamma $ for $\gamma \in \R^K$.
	\end{assumption}
	
	This assumption has several important parts. We assume that the second moment of the outcome is finite and that the residual is $L_1$-integrable, which are not strong assumptions for real data. Then we require one of two additional conditions to hold. 
	The first potential condition is that the convergence of $\boldw_{\text{COT}}$ to $\bolda$ occurs at a faster than $\sqrt{n}$-rate. We note that this is not actually that strong of an assumption in practice since the convergence to $\alpha$ happens at a $\sqrt{n}$-rate and COT is directly targeting the empirical distribution $\bolda$. Thus, we expect the convergence to the empirical distribution to be faster than $1/\sqrt{n}$, which is what we  observe in practice. The second potential condition requires that the basis functions $B$ encompass the true conditional mean and that the empirical means of $B$ converge faster than $1/\sqrt{n}$. We observe in practice that the convergence of the basis functions is actually possible with relatively small sample sizes, making this assumption very plausible; however, outcome models are typically not known---though this can be ameliorated by using nonparametric  models. With these conditions, we have our next theorem.
	\begin{theorem}
		\label{thm:vopt}
		If Assumptions \ref{assum:sutva}--\ref{assum:outcome_reg} hold, then as $n \to \infty$, 
		\[ \sqrt{n} \left (\hat{\tau} - \tau \right) \dist \N\left(0, V_{\text{opt}}\right ), \]
		where $V_{\text{opt}}$ is the semiparametrically efficient variance as in Theorem 1 of \citet{Hahn1998}.
	\end{theorem}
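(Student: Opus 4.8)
The plan is to show that $\hat{\tau}$ is asymptotically linear with influence function equal to the efficient influence function of \citet{Hahn1998}, and then invoke the central limit theorem. Writing $\hat{\mu}_z \defn \sum_i w_i \indicator(Z_i = z) Y_i$ so that $\hat{\tau} = \hat{\mu}_1 - \hat{\mu}_0$, I would treat the two arms symmetrically. For each arm, add and subtract the conditional mean to split the weighted outcome average into a \emph{residual} term and a \emph{regression} term:
\begin{equation}
\hat{\mu}_z = \sum_i w_i \indicator(Z_i = z)\{Y_i - \mu_z(X_i)\} + \sum_i w_i \indicator(Z_i = z)\mu_z(X_i).
\label{eq:vopt_decomp}
\end{equation}
Because $\E\{Y(z)\} = \E\{\mu_z(X)\}$ by iterated expectation and Assumption \ref{assum:si}, the target for the $z$-arm is $\E\{\mu_z(X)\}$, and the goal is to show that \eqref{eq:vopt_decomp} reproduces the two pieces of the efficient influence function, $\indicator(Z=z)\{Y - \mu_z(X)\}/\p(Z = z \given X)$ and $\mu_z(X)$, up to an $o_p(n^{-1/2})$ remainder.

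For the residual term, the key structural fact is that the COT weights in \eqref{eq:margOT} are computed only from the covariate measures $\boldw_z$ and $\bolda$; they are therefore measurable with respect to $(X_i, Z_i)_{i=1}^n$ and do not depend on the outcomes. Conditioning on $(X_i, Z_i)_{i=1}^n$, the residuals $Y_i - \mu_z(X_i)$ are independent and mean zero, so the first term in \eqref{eq:vopt_decomp} is a weighted sum of conditionally centered variables. By Corollary \ref{coro:cot_conv_is} the weights converge almost surely to the self-normalized importance sampling weights $w_i^\star$ with $\breve{w}_i^\star = 1/\p(Z_i = z \given X_i)$, and by Theorem \ref{thm:conv_cot} this convergence occurs at the $\sqrt{n}$ rate. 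Combined with the self-normalization $\sum_i w_i \indicator(Z_i = z) = 1$ (whose denominator converges to $\E\{\indicator(Z=z)/\p(Z = z \given X)\} = 1$) and with $\E|Y - \mu_z(X)| < \infty$ from Assumption \ref{assum:outcome_reg}, I would show that the residual term equals $\frac{1}{n}\sum_i \indicator(Z_i = z)\{Y_i - \mu_z(X_i)\}/\p(Z_i = z \given X_i) + o_p(n^{-1/2})$. The main task here is to bound $\sum_i (w_i - w_i^\star)\indicator(Z_i=z)\{Y_i - \mu_z(X_i)\}$, which the conditional independence handles cleanly: its conditional variance is of order $\sum_i (w_i - w_i^\star)^2 \, \mathrm{Var}(Y_i \given X_i)$, and the Theorem \ref{thm:conv_cot} rate forces this to be $o_p(n^{-1})$, negligible against the leading $O_p(n^{-1/2})$ average.

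The regression term is where the distributional balance does its work, supplying the augmentation $\mu_z(X)$ that turns the otherwise inefficient importance sampling estimator into an efficient one. I would show $\sum_i w_i \indicator(Z_i = z)\mu_z(X_i) = \frac{1}{n}\sum_j \mu_z(X_j) + o_p(n^{-1/2})$ under either branch of Assumption \ref{assum:outcome_reg}. Under the basis-function branch, $\mu_z(X) = B(X)\trans\gamma$ lies in the span of the balanced functions, so the difference is $\sum_k \gamma_k\{\sum_i w_i \indicator(Z_i=z) B_k(X_i) - \frac{1}{n}\sum_j B_k(X_j)\}$, bounded in absolute value by $\|\gamma\|_2\|\delta\|_2$ via the constraint \eqref{eq:cot_bf} and Cauchy--Schwarz, hence negligible at the $n^{-1/2}$ scale under the tolerance rate in Assumption \ref{assum:outcome_reg}. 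Under the Sinkhorn branch, I would exploit the smoothness of the cost (Assumption \ref{assum:conv_rootn}): the optimal dual potentials of the regularized problem are smooth, and the functional gap $\int \mu_z \, d\boldw_{\text{COT}} - \int \mu_z \, d\bolda$ can be controlled through the dual representation of $S_\lambda$ together with the regularity of $\mu_z$, using $S_\lambda(\boldw_{\text{COT}}, \bolda) = o_p(n^{-1/2})$. I expect this Sinkhorn-to-expectation bound to be the main obstacle, since $\mu_z$ is a fixed external function rather than one of the dual potentials; establishing the bound at the required $o_p(n^{-1/2})$ scale means relating the Sinkhorn divergence to an integral probability metric over a function class containing $\mu_z$, and the delicacy lies precisely in whether the divergence controls the gap linearly rather than through a weaker square-root relationship.

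Finally, I would assemble the two arms. Combining the residual and regression expansions for $z = 1$ and $z = 0$ yields the asymptotically linear representation
\begin{equation}
\sqrt{n}\,(\hat{\tau} - \tau) = \frac{1}{\sqrt{n}}\sum_i \psi(Y_i, Z_i, X_i) + o_p(1),
\label{eq:vopt_linear}
\end{equation}
where $\psi(Y, Z, X) = \frac{Z\{Y - \mu_1(X)\}}{\p(Z = 1 \given X)} - \frac{(1-Z)\{Y - \mu_0(X)\}}{\p(Z = 0 \given X)} + \mu_1(X) - \mu_0(X) - \tau$ is precisely the efficient influence function of \citet{Hahn1998}. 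Since the summands are iid with mean zero, and their variance is finite by $\E(Y^2) < \infty$ (Assumption \ref{assum:outcome_reg}) together with the overlap condition $0 < \p(Z = 1 \given X) < 1$ (Assumption \ref{assum:si}), applying the Lindeberg--L\'evy central limit theorem to \eqref{eq:vopt_linear} gives $\sqrt{n}\,(\hat{\tau} - \tau) \dist \N(0, V_{\text{opt}})$ with $V_{\text{opt}} = \E\{\psi^2\}$, which is the semiparametric efficiency bound of Theorem 1 in \citet{Hahn1998}.
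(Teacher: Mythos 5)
Your proposal follows essentially the same route as the paper's proof: the identical decomposition of $\hat{\tau}-\tau$ into the efficient score, a weight-residual term, and a regression-imbalance term, with the residual term controlled via the convergence of the COT weights to the self-normalized inverse propensity weights (Corollary \ref{coro:cot_conv_is}, Theorem \ref{thm:conv_cot}), the regression term handled by the same two branches of Assumption \ref{assum:outcome_reg} (Cauchy--Schwarz against the tolerances for the basis-function branch, the $S_\lambda(\boldw_{\text{COT}},\bolda)=o_p(n^{-1/2})$ rate otherwise), and the central limit theorem applied to the efficient influence function of \citet{Hahn1998}. The only cosmetic difference is that you bound the residual term through its conditional variance given $(X_i,Z_i)_{i=1}^n$, exploiting the outcome-independence of the weights, where the paper uses an essential-supremum bound plus Slutsky's theorem; the Sinkhorn-divergence-to-expectation step you rightly flag as the delicate point is asserted rather than elaborated in the paper's proof as well.
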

	\noindent This result follows from the fact that the expansion of the bias $\hat{\tau} - \tau$ has the form of the semiparametrically efficient score function.
	We defer a proof to Appendix \ref{sec:vopt_proof} of the Supplementary Materials. 
	
	Theorem \ref{thm:vopt} also gives us the following corollary.
	\begin{corollary} \label{coro:dr}
		Under Assumptions \ref{assum:sutva}--\ref{assum:outcome_reg}, then COT is doubly robust for a large enough $n$:
		\begin{align*}
			\hat{\tau}_{dr} & = n \inv\sum_{i=1}^n  n w_i Z_i \{Y_i - \mu_1(X_i) \} - n w_i (1-Z_i) \{Y_i - \mu_0(X_i) \}  + \{\mu_1(X_i) - \mu_0(X_i)\} \\
			&\approx  n \inv\sum_{i=1}^n  n w_i Z_i Y_i -  n w_i (1-Z_i) Y_i = \hat{\tau}
		\end{align*}
	\end{corollary}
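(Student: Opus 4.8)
The plan is to reduce the corollary to the statement that the two augmentation terms in $\hat{\tau}_{dr}$ are asymptotically negligible. First I would expand $\hat{\tau}_{dr} - \hat{\tau}$ directly from the two displayed expressions. The $n w_i Z_i Y_i$ and $n w_i (1-Z_i) Y_i$ pieces cancel, leaving only the terms carrying $\mu_1$ and $\mu_0$. Collecting these and using the normalization $\sum_i w_i Z_i = \sum_i w_i (1-Z_i) = 1$ gives
\[
\hat{\tau}_{dr} - \hat{\tau} = \Bigl(\tfrac1n\textstyle\sum_i \mu_1(X_i) - \sum_{i:Z_i=1} w_i \mu_1(X_i)\Bigr) - \Bigl(\tfrac1n\sum_i \mu_0(X_i) - \sum_{i:Z_i=0} w_i \mu_0(X_i)\Bigr).
\]
Each parenthesized term is exactly $\int \mu_z \, d(\bolda - \boldw_z)$, the discrepancy between the outcome regression averaged over the full empirical target $\bolda$ and over the COT-weighted group distribution $\boldw_z$. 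Thus the corollary reduces to proving $\int \mu_z \, d(\bolda - \boldw_z) = o_p(1)$ (and, to inherit the efficient limit of Theorem \ref{thm:vopt}, at order $o_p(n^{-1/2})$) for $z \in \{0,1\}$.

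Next I would dispatch the two branches of Assumption \ref{assum:outcome_reg} separately. In the basis-function branch, where $\mu_z(X) = B(X)\trans \gamma$, linearity gives $\int \mu_z \, d(\bolda - \boldw_z) = \gamma\trans e_z$, where $e_z$ is the balancing-error vector whose $k$-th coordinate is $\tfrac1n\sum_j B_k(X_j) - \sum_{i:Z_i=z} w_i B_k(X_i)$. By the balancing constraint \eqref{eq:cot_bf} we have $|e_{z,k}| \le \delta_k$, so Cauchy--Schwarz yields $|\gamma\trans e_z| \le \|\gamma\|_2 \|\delta\|_2$, which vanishes under the stated rate on $\|\delta\|$. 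In the divergence branch, where $S_\lambda(\boldw_{\text{COT}}, \bolda) = o_p(n^{-1/2})$, I would use that a small Sinkhorn divergence controls integrals of regular functions: since $S_\lambda$ metrizes weak convergence and $\mu_z$ is integrable, the discrepancy $\int \mu_z \, d(\bolda - \boldw_z)$ is bounded by a quantity that shrinks with $S_\lambda(\boldw_z,\bolda)$ via the dual potentials of the regularized problem (equivalently, the associated MMD/RKHS bound). Combining the branches gives $\hat{\tau}_{dr} - \hat{\tau} = o_p(1)$, so for large $n$ the two estimators agree; and when the order is $o_p(n^{-1/2})$, $\sqrt{n}(\hat{\tau}_{dr} - \hat{\tau}) \to 0$, so $\hat{\tau}$ inherits the limiting distribution of the standard doubly robust (AIPW) estimator $\hat{\tau}_{dr}$ from Theorem \ref{thm:vopt}, which is precisely the double-robustness claim.

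The main obstacle is the divergence branch: I need a \emph{quantitative} link from the smallness of $S_\lambda(\boldw_z,\bolda)$ to the smallness of the linear functional $\int \mu_z \, d(\bolda - \boldw_z)$. Metrization of weak convergence alone only gives that this discrepancy vanishes, not the $n^{-1/2}$ rate needed to tie the result to Theorem \ref{thm:vopt}. To obtain the rate I expect to rely on the smoothness and Lipschitz assumptions on the cost (Assumption \ref{assum:conv_rootn}) together with the dual representation of the Sinkhorn divergence, so that $\mu_z$ — or its projection onto the span of the Sinkhorn potentials — is controlled at the required order; establishing that this projection captures $\mu_z$ tightly enough is the delicate step.
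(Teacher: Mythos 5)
Your proposal matches the paper's own argument: the paper proves the corollary implicitly inside the proof of Theorem \ref{thm:vopt}, where your discrepancy $\int \mu_z \, d(\bolda - \boldw_z)$ is exactly the term $C$, controlled in the basis-function branch by the constraint bound $\sum_k \delta_k|\gamma_k| \le \|\delta\|_2\|\gamma\|_2$ and in the divergence branch by asserting that $S_\lambda(\boldw_{\text{COT}},\bolda) = o_p(n^{-1/2})$ forces the empirical expectations to converge faster than $\sqrt{n}$. The ``delicate step'' you flag --- a quantitative link from the Sinkhorn divergence to the linear functional of $\mu_z$ --- is in fact left as a bare assertion in the paper, so your treatment is, if anything, more careful on that point while following the same route.
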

	\noindent In practice, one can check if Corollary \ref{coro:dr} holds by examining both the optimal transport distance between distributions $\boldw_{\text{COT}}$ and $\bolda$ as well as the balance of the hypothesized outcome models between samples of the covariate functions that determine the assumed outcome models . If the hypothesized outcome models are well-balanced, then there is little benefit to model augmentation. A manifestation of this phenomenon can be seen in the simulations in Section \ref{sec:sims} where adding in model augmentation does not change the estimates from using COT even for sample sizes as low as $500$. 

	\subsection{Imputing the missing potential outcomes}
	One of the advantages of COT is that it provides a method to impute the missing potential outcomes, if so desired. Moreover, methods based on this estimator will also converge to the correct treatment effect, even when using weights from other methods. 
	
	We can construct a transportation matrix, $\boldP$, \textit{a posteriori} for Eq. \eqref{eq:COT_div} by solving $\ot{\boldw_z}{\bolda}{\lambda'},$ for any $\lambda' > 0$ using the appropriate weights from treatment group $Z=z$. Then the missing potential outcomes can be estimated by the barycentric projection in Eq. \eqref{eq:map}: $\hat{Y}_j(z) =  \argmin_\nu \sum_i \indicator(Z_i = z) c(Y_i, \nu) \boldP_{ij}$. In practice, we do not have to use the same cost function used to estimate $\boldP$ but using a squared-$L_2$ cost gives us the following theorem
	
	\begin{theorem}\label{thm:bp_conv}
		Assume that $\alpha$ is compactly supported and admits a density with finite Fisher information $I_0$ and finite second moments, and $c=\| \cdot \|_2^2$. Further, assume $\mu_z$ is $L$-Lipschitz, $\var(Y - \mu_z \given X)< \xi^2 < \infty$ for all $X \in \mcX$, and that Assumptions \ref{assum:sutva}--\ref{assum:si} hold.
		Then if $\boldw \rightharpoonup \alpha$, $\boldP$ is estimated via Eq. \eqref{eq:ot_reg} with an entropy penalty, and $\lambda \asymp n^{- \frac{1}{2 d' + 9}}$ for $d' = \lceil 5d/4 \rceil$, 
		\[\E_\alpha \|\hat{Y}(z) - \mu_z(X) \|^2 \lesssim (1 + I_0) n^{- \frac{2}{2d' + 9 }}.\] 
	\end{theorem}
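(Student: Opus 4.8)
The plan is to treat $\hat{Y}(z)$ as an entropic optimal transport analogue of a local-averaging (kernel) regression estimator and to run a bias--variance decomposition. With the squared-Euclidean cost the barycentric projection in Eq.~\eqref{eq:map} reduces to the $\boldP$-weighted average $\hat{Y}_j(z) = \tfrac{1}{a_j}\sum_{i:Z_i=z}\boldP_{ij}Y_i$, where the regularizer $\lambda$ plays the role of a bandwidth. Writing $Y_i = \mu_z(X_i) + \varepsilon_i$ with $\E(\varepsilon_i \given X_i)=0$ (valid by Assumption~\ref{assum:si}) and introducing the noiseless projection $\tilde{Y}_j(z) = \tfrac{1}{a_j}\sum_{i:Z_i=z}\boldP_{ij}\mu_z(X_i)$, I would start from
\[
\E_\alpha\|\hat{Y}(z)-\mu_z(X)\|^2 \;\le\; 2\,\E_\alpha\|\hat{Y}(z)-\tilde{Y}(z)\|^2 \;+\; 2\,\E_\alpha\|\tilde{Y}(z)-\mu_z(X)\|^2,
\]
and bound the two terms separately as a variance and an approximation contribution.

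For the variance term, conditioning on the covariates and on $\boldP$ makes the $\varepsilon_i$ mean-zero with conditional variance at most $\xi^2$, so that $\E_\alpha\|\hat{Y}(z)-\tilde{Y}(z)\|^2 \le \xi^2\sum_j \big(\sum_i \boldP_{ij}^2\big)/a_j$. The crux here is to control the column $\ell_2$-mass $\sum_{ij}\boldP_{ij}^2/a_j$ of the entropic plan. Using the Gibbs form $\boldP_{ij}=u_i v_j\,e^{-\boldC_{ij}/\lambda}$ of the entropic solution together with the fact that each column concentrates on an $O(\sqrt{\lambda})$-neighborhood (and hence on comparatively few sample points), this mass scales like an inverse power of $\lambda$ over $n$; importing the entropic optimal transport sample-complexity machinery \citep{Genevay2018,Mena2019} supplies the precise dependence and is the source of the dimension exponent $d'=\lceil 5d/4\rceil$.

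For the bias term, I would use the $L$-Lipschitz regularity of $\mu_z$ together with the interpretation of the barycentric map $T_\lambda(X_j)=\tfrac{1}{a_j}\sum_i\boldP_{ij}X_i$ as an entropic estimator of the optimal transport map between $\boldw_z$ and $\bolda$. Since Theorem~\ref{thm:conv_cot} gives $\boldw\rightharpoonup\alpha$ and $\bolda\rightharpoonup\alpha$, this target map is the identity, and on a compactly supported density the leading deviation of the population entropic map from the identity is the $O(\lambda)$ score drift $\tfrac{\lambda}{2}\nabla\log p$. Its squared $L^2(\alpha)$-norm is therefore governed by the Fisher information, $\approx \tfrac{\lambda^2}{4}\int\|\nabla\log p\|^2\,d\alpha = \tfrac{\lambda^2}{4}I_0$, and transferring this displacement through $\mu_z$ produces the $L^2(1+I_0)\lambda^2$ contribution. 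Passing from the empirical couplings to the population entropic map uses compact support and the weak convergence of Theorem~\ref{thm:conv_cot}.

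Combining the two bounds gives, schematically, $\E_\alpha\|\hat{Y}(z)-\mu_z(X)\|^2 \lesssim (1+I_0)\lambda^2 + n^{-1}\lambda^{-(2d'+7)}$; minimizing the right-hand side over $\lambda$ yields the stated choice $\lambda\asymp n^{-1/(2d'+9)}$ and the rate $n^{-2/(2d'+9)}$. The main obstacle is the bias analysis: obtaining the sharp second-order $\lambda^2(1+I_0)$ rate---rather than the crude $O(\lambda)$ one gets from the raw spread (Jensen gap) of the plan's columns against a merely Lipschitz $\mu_z$---requires exploiting the near-symmetry of the entropic self-coupling to cancel the first-order displacement and isolate the score-driven term, and then controlling the empirical-to-population gap uniformly. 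A secondary difficulty is that the available entropic optimal transport guarantees are stated for the scalar cost $S_\lambda$, so they must be upgraded to the vector-valued barycentric projection to yield both the variance bound and the identity-map approximation.
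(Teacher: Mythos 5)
Your proposal matches the paper's proof in all essentials: the same noise-plus-approximation decomposition (the paper keeps the cross term and shows it is dominated by the other two rather than using $2a^2+2b^2$), the same variance control via the squared column mass of the Gibbs-form entropic plan with the subgaussian sample-complexity bound of \citet{Mena2019} supplying the exponent $d'=\lceil 5d/4\rceil$, and the same Lipschitz transfer of the bias to the entropic map's deviation from the identity, balanced at $\lambda \asymp n^{-1/(2d'+9)}$. What you flag as the main obstacle---the sharp second-order $\lambda^2 I_0$ score-drift bound for the entropic barycentric projection---is precisely what the paper imports directly from Theorems 4 and 5 of \citet{pooladian_entropic_2021} (with $t=3$ since the limiting Monge map is the identity), so no new argument is required there.
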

	There are several things to observe about this theorem and its assumptions. 
	First, the assumptions on the cost and distribution allow us to connect the optimal transport solutions to the Monge maps of Eq. \eqref{eq:monge_prob}, but also allow us to give a rate for our theorem. Second, using penalized optimal transport ensures that we average out the errors $Y - \mu_z$. Third, in a related manner, the Lipschitz continuity of the outcome means that this averaging out of the errors will still achieve good estimates of $\mu_z$. Fourth and finally, this theorem also suggests that $\mu_z$ is like a Monge map, $T$, between potential outcomes.
	
	Unfortunately, these imputations are not necessarily useful by themselves.
	\begin{proposition}
		Under a squared-$L_2$ metric, an ATE estimator based solely on the barycentric projection, $n \inv\sum_i\hat{Y}_i(1) - \hat{Y}_i(0)$,
		is equivalent to Eq. \eqref{eq:hajek_init}.
		\label{prop:l2_map}
	\end{proposition}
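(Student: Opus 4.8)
The plan is to compute the barycentric-projection estimator explicitly under the squared-$L_2$ cost and show it collapses algebraically to the Hájek estimator of Eq. \eqref{eq:hajek_init}. The key observation, already recorded in the text right after Eq. \eqref{eq:map}, is that for $c = \|\cdot\|_2^2$ the barycentric projection is a \emph{weighted average}: applying this to the outcomes rather than the covariates gives $\hat{Y}_j(z) = \frac{1}{a_j}\sum_{i:Z_i=z}\boldP_{ij}Y_i$, where $\boldP$ is the transport plan obtained by solving $\ot{\boldw_z}{\bolda}{\lambda'}$. So the first step is simply to substitute this closed form into the naive ATE estimator $n^{-1}\sum_j \hat{Y}_j(1) - \hat{Y}_j(0)$.

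The second step is to exploit the marginal constraints that $\boldP$ satisfies as a member of $\boldU(\boldw_z,\bolda)$. Because $\boldP$ has row margin $\boldw_z$ (the weighted treatment-$z$ empirical measure) and column margin $\bolda$ (the uniform target with mass $a_j = 1/n$), summing over $j$ after the substitution lets me swap the order of summation: $n^{-1}\sum_j \frac{1}{a_j}\sum_{i:Z_i=z}\boldP_{ij}Y_i = \sum_{i:Z_i=z} Y_i \big(\sum_j \boldP_{ij}\big)$, and the inner sum $\sum_j \boldP_{ij}$ is exactly the row margin, namely the COT weight $w_i$ on unit $i$. Carrying this out for both $z=1$ and $z=0$ yields $\sum_{i} w_i Z_i Y_i - \sum_i w_i(1-Z_i)Y_i$, which is precisely Eq. \eqref{eq:hajek_init}. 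Thus the dependence on the transport plan $\boldP$ cancels entirely, and only the weights survive.

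I expect the main obstacle to be purely bookkeeping rather than conceptual: I must track the normalizations carefully so that the $1/a_j$ factor in the barycentric projection, the $n^{-1}$ in the ATE estimator, and the mass $a_j = 1/n$ of the uniform target measure combine correctly, and I must confirm that the row margin of $\boldP$ is genuinely $w_i$ (equivalently, that the entropic regularization in Eq. \eqref{eq:ot_reg} does not disturb the marginal constraints, which it does not, since $\boldP$ is still optimized over $\boldU(\boldw_z,\bolda)$). Once the margins are pinned down, the collapse is immediate and the regularization parameter $\lambda'$ plays no role. One caveat worth stating is that the argument is exact only because the squared-$L_2$ barycentric projection is linear in $Y$; for a non-$L_2$ cost (e.g.\ the $L_1$ weighted-median map) this cancellation fails, which is consistent with the proposition being stated specifically for the squared-$L_2$ metric and motivates why these imputations are ``not necessarily useful by themselves.''
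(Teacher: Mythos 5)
Your proposal is correct and takes essentially the same route as the paper's proof: you substitute the closed-form weighted-average barycentric projection under the squared-$L_2$ cost into the naive ATE estimator, interchange the order of summation, and invoke the marginal constraint that $\sum_j \boldP_{ij}$ equals the weight $w_i$, so that everything collapses to Eq.~\eqref{eq:hajek_init}, exactly as the paper does one treatment arm at a time. Your added observations---that the uniform target mass $a_j = 1/n$ cancels the $n^{-1}$, that regularization does not disturb the marginals, and that linearity of the projection in $Y$ is what makes the argument specific to the squared-$L_2$ cost---are accurate and consistent with the paper's remarks.
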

	\noindent Clearly, care must be taken when using these estimators. Proofs are located in Appendix \ref{sec:bp_proof} and \ref{prop:l2_map} of the Supplementary Materials.
	
	\subsection{Practical considerations}
	
	In this section, we turn to the practical considerations of optimizing the COT weights. Namely, we discuss the tuning of the hyperparameters and estimation of the weights.
	
	\paragraph{Hyperparameter tuning.}
	Our goal is to select the hyperparameters $\lambda$ and $\delta$ so that we achieve the best distributional balance without over-fitting the current data.
	To do so, we propose a bootstrap based tuning procedure detailed in Algorithm \ref{alg:bootwass}.
	
	\begin{algorithm}[tb]
		
		\KwData{Grid of parameter values $\Delta = \{\{\lambda_1,  \delta_1\},  \{\lambda_2,  \delta_2\},...\}$, number of bootstrap samples $K$, empirical measure $\bolda$,  empirical measure $\bolda_z$, treatment group of interest $z$, $\lambda' \geq 0$}
		\KwResult{Value of hyperparameters, $\hat{\lambda}, \hat{\delta}$ }
		\ForEach{$\{\lambda, \delta\}\in \Delta$}{
			Estimate weights $\boldw_{\text{COT}}$ given parameters $\{\lambda, \delta\}$\;
			\For{$k$ in $1,...,K$}{
				Bootstrap new target data  $\bolda_k ^\star \sim \bolda$\;
				Bootstrap new source data $\bolda_{z,k} ^\star \sim \bolda_z$\;
				Set the unnormalized weights for empirical measure in treatment group  $z$ as $\tilde{\boldw}_k^\star  = \indicator(Z=z) \odot \boldw \odot \bolda_{z,k}^\star $, where $\odot$ is the element-wise product\;
				Renormalize the weights, $\boldw_k^\star = \tilde{\boldw}_k^\star / (\tilde{\boldw}_k^{\star \top} \ones_n)$\;
			}
			Set $T_{\{\lambda,\delta\}} =  K\inv\sum_k^K \ot{\boldw_k ^\star}{\bolda_k ^\star}{\lambda'}$\;
		}
		\Return $\hat{\lambda}, \hat{\delta} = \argmin_{\{\lambda, \delta\}}  T_{\{\lambda,  \delta\}}$\;
		\caption{Choosing hyperparameters for COT}
		\label{alg:bootwass}
	\end{algorithm}

	We justify this procedure in two ways. First, practitioners probably do not have an ideal weight penalty in mind based on subject matter knowledge. Second, because the COT weights target the Radon-Nikodym derivative, 
	this tuning procedure will select the hyperparameters that lead to weights robust to sampling variation and better approximation of these population level quantities. We present an empirical examination of this tuning algorithm in Appendix \ref{sec:tun_alg} of the Supplementary Materials that demonstrates its effectiveness at selecting the optimal $\lambda$.
	
	\paragraph{Weight estimation.}
	Given the known complexity of estimating optimal transport distances, \citet{huling_energy_2020} raise the concern that methodology like COT will not be feasible. Fortunately, these concerns are addressed by using regularized optimal transport.
	
	In our simulations, we find that estimating the COT weights only takes a few seconds for a 1000 observations. Eq. \eqref{eq:COT_div} can be solved by alternating Sinkhorn divergence calculations in GeomLoss \citep{Feydy2018} and optimization steps on the weights. With balancing constraints, we use the Frank-Wolfe algorithm to optimize the weights \citep{Frank1956}; without balancing constraints, we can use an LBFGS algorithm.
	
	\subsection{Target average treatment effects and multi-valued treatments}
	COT is also well-suited to the case where the target estimand is for a separate sample entirely. This is because the weights can be calibrated to target any arbitrary set of samples. The only additional assumption for convergence is that there is common support between distributions.
	
	Finally, COT is easily extended to more than two treatments as long as the treatment values are discrete. This is because COT estimates  weights separately for each treatment group.

	\section{Connections to Existing Methods} \label{sec:compare}
	The COT framework is actually related to several other methods in the literature and can be seen as an interpolation between all of them, as we detail below.
	
	\paragraph{Synthetic control method}
	\citet{Abadie2003} first proposed the synthetic control method (SCM) as a way of performing counterfactual inference for a single treated unit, $j$. 
	The objective function is 
	\[\min_{w: w \trans \ones_n = 1, w_i \geq 0} \left \| \sum_i X_i w_i - X_j \right \|_2^2,\]
	which is the same objective as the Monge map in Eq. \eqref{eq:monge_prob} when $c$ is the squared-Euclidean distance and $T$ has the corresponding form of the barycentric projection in Eq. \eqref{eq:map}. This means that SCM is actually estimating a version of the COT problem with $\lambda = 0$.
	\begin{proposition} \label{prop:scm}
		If $\alpha$ admits a density, $c=\| \cdot \|_2^2$, and $\exists x_o \text{ s.t. } \E\{c(x_0,X)\} < \infty$,
		then SCM is asymptotically the same as COT with $\lambda = 0$.
	\end{proposition}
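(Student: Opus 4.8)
The plan is to show that at $\lambda = 0$ the COT objective collapses to an unregularized optimal transport problem whose associated barycentric map is exactly the SCM weighted average, and then to invoke Brenier's theorem to identify the common asymptotic limit. The first step is to check that the Sinkhorn divergence degenerates at $\lambda = 0$: since $c(x,x)=0$ forces $\ot{\boldw_z}{\boldw_z}{} = \ot{\bolda}{\bolda}{} = 0$ for unregularized transport, the definition in Eq.~\eqref{eq:sink_div} gives $S_0(\boldw_z,\bolda) = \ot{\boldw_z}{\bolda}{}$. Hence $\cotobj{\bolda}{0} = \min_{\boldw}\,\ot{\boldw_1}{\bolda}{} + \ot{\boldw_0}{\bolda}{}$ separates into two unregularized transport problems, one per treatment group, so it suffices to analyze a single group against the target.

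Next I would make the per-unit correspondence explicit. By Eq.~\eqref{eq:map}, the barycentric projection of a target unit $j$ under squared-Euclidean cost is the weighted mean $\sum_i w_i X_i$ of the source units, with $w_i = \boldP_{ij}/a_j$. The Monge cost incurred at that unit is $c\{T(x_j),x_j\} = \|\sum_i w_i X_i - X_j\|_2^2$, which is precisely the SCM objective. This identifies SCM with the Monge problem of Eq.~\eqref{eq:monge_prob} under squared-Euclidean cost, restricted to maps of barycentric form.

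The remaining step is to certify that this Monge problem is the $\lambda \to 0$ limit of the transport problem inside COT and that the two agree asymptotically. The moment condition $\E\{c(x_0,X)\} < \infty$ guarantees that $\ot{\boldw_z}{\bolda}{}$ is finite and the problem well-posed, while $\alpha$ admitting a density (together with $\bolda \rightharpoonup \alpha$) lets me apply Brenier's theorem, via \citet{Ambrosio2005} as already quoted for this setting: the optimal Kantorovich coupling converges to a unique deterministic map. Because the optimal plan then concentrates on a deterministic image, the barycentric projection converges to that Monge map, so the COT-$\lambda{=}0$ weights and the SCM weights solve the same limiting problem, which is the claimed asymptotic equivalence.

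\emph{The main obstacle} is that in finite samples the Kantorovich cost $\sum_i w_i \|X_i - X_j\|_2^2$ and the SCM/Monge cost $\|\sum_i w_i X_i - X_j\|_2^2$ are not equal; by Jensen they differ by the within-weights dispersion $\sum_i w_i \|X_i - \sum_{i'} w_{i'} X_{i'}\|_2^2$. The crux is therefore showing this dispersion vanishes in the limit. This is exactly what the density assumption buys: it forces the limiting coupling to be deterministic, so the transported mass collapses onto a single image point and the dispersion term disappears. Making this Brenier/Ambrosio convergence rigorous---and arguing that it drives both the dispersion gap and the discrepancy between the reweighted empirical measures to zero---is the part requiring the most care.
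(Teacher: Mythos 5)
Your overall route is the paper's route: you reduce SCM to a Monge problem whose map is constrained to barycentric form under the squared-Euclidean cost, you identify the finite-sample discrepancy with the Kantorovich objective, and you invoke Brenier-type uniqueness plus stability of plans under the density assumption to make the two problems agree in the limit. Your Jensen decomposition is in fact a cleaner statement of the paper's finite-sample condition: the paper notes its Eqs.\ \eqref{eq:scm_final} and \eqref{eq:kant_final} coincide exactly when $\sum_j \tilde{X}_j \boldP_{ij} = X_i \sum_j \boldP_{ij}$, which is precisely the vanishing of your within-weights dispersion term. Your opening observation that $S_0(\boldw_z,\bolda)=\ot{\boldw_z}{\bolda}{}$ is a useful explicit reduction the paper leaves implicit.

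However, there is one genuine missing step, and it sits exactly where you yourself flagged ``the part requiring the most care'' without executing it. Brenier's theorem and the stability of optimal plans apply to couplings between \emph{fixed} (or convergent) marginals; here the source marginal $\boldw$ is the optimization variable, so before you may conclude that the optimal coupling degenerates onto a deterministic map---and hence that your dispersion term dies---you must first show that the \emph{optimizing} measures converge, i.e.\ that $\boldw_{SCM} \rightharpoonup \alpha$ and $\boldw_{\text{COT}} \rightharpoonup \alpha$. Your appeal to ``$\bolda \rightharpoonup \alpha$'' pins down only the fixed target margin. The paper supplies the missing device with a feasibility sandwich: the self-normalized importance-sampling weights $\boldw^\star$ are feasible, so $\ot{\boldw_{NNM}}{\bolda}{} \leq \ot{\boldw^\star}{\bolda}{} \to 0$ (this is where strong ignorability/overlap enters, via the argument of Theorem \ref{thm:conv_cot}), and then, writing $\operatorname{M}$ for the Monge-form objective with barycentric projections, $\operatorname{M}(\boldw_{SCM},\bolda) \leq \operatorname{M}(\boldw_{NNM},\bolda) \to 0$, using $\boldw_{NNM} \to \alpha$ and the fact that the limiting barycentric projection is the identity, $\int x \, d\pi(x \mid x') = x'$. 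Only once both limiting margins are pinned at $\alpha$ does the Monge--Kantorovich identification (the paper's closing computation, which is your dispersion-vanishing claim in integral form) close the argument. Without some such comparison argument, your final paragraph asserts rather than proves that the limit coupling exists and is deterministic along the optimizing sequence.
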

	\noindent For a proof, see Appendix \ref{sec:scm_proof} in the supplementary materials.
	One potential drawback of using SCM versus the formulation used for COT can be seen in the following simple example in Figure \ref{fig:scm_counter}. COT favors the nearest point while the SCM method utilizes the points further away, which could be a problem if the response surface looks like Figure \ref{fig:scm_response_surf}. To avoid this, SCM could incorporate a modified objective that directly models both the barycentric projection and distance between units as in \citet{Perrot2016}.
	
	\begin{figure}[tb]
		\begin{subfigure}[b]{0.4\textwidth}
			\centering
			\includegraphics[width=\textwidth]{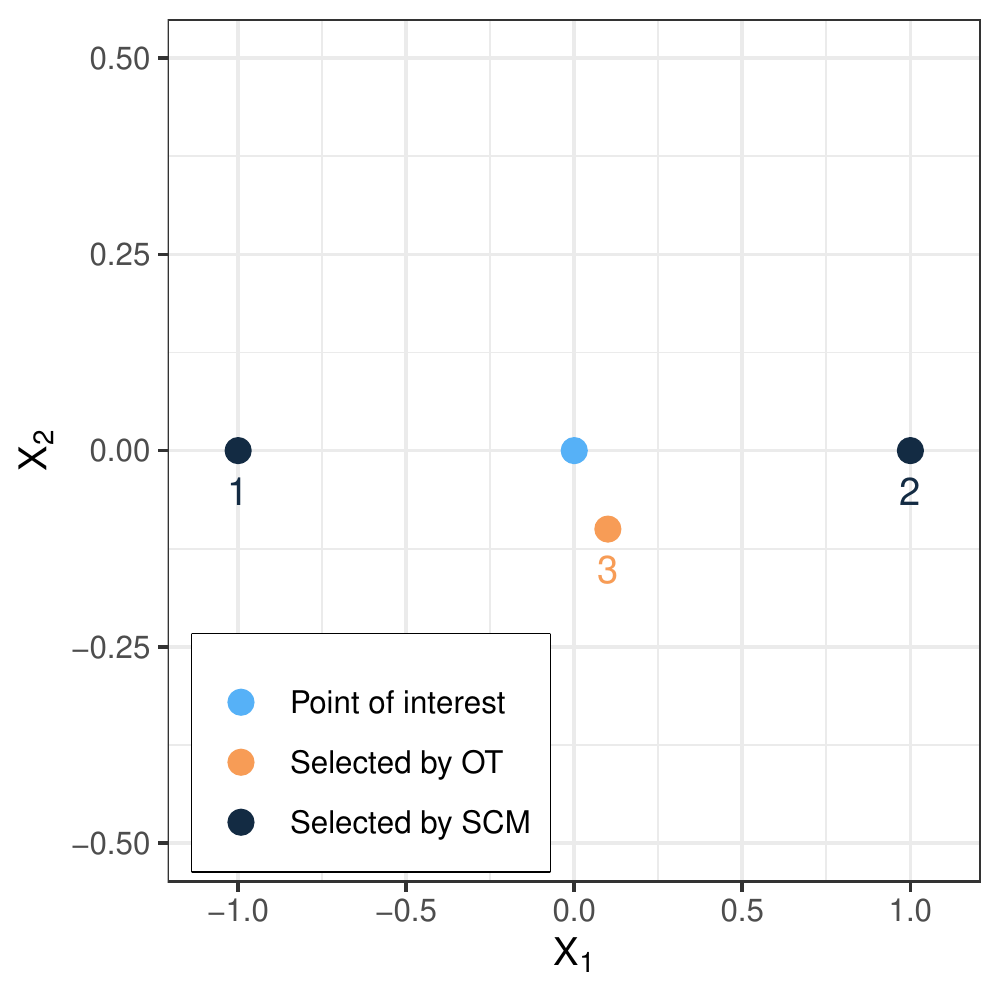}
			\caption{Potential points for selection and the point of interest.}
		\end{subfigure}
		\begin{subfigure}[b]{0.6\textwidth}
			\centering
			\includegraphics[width=\textwidth]{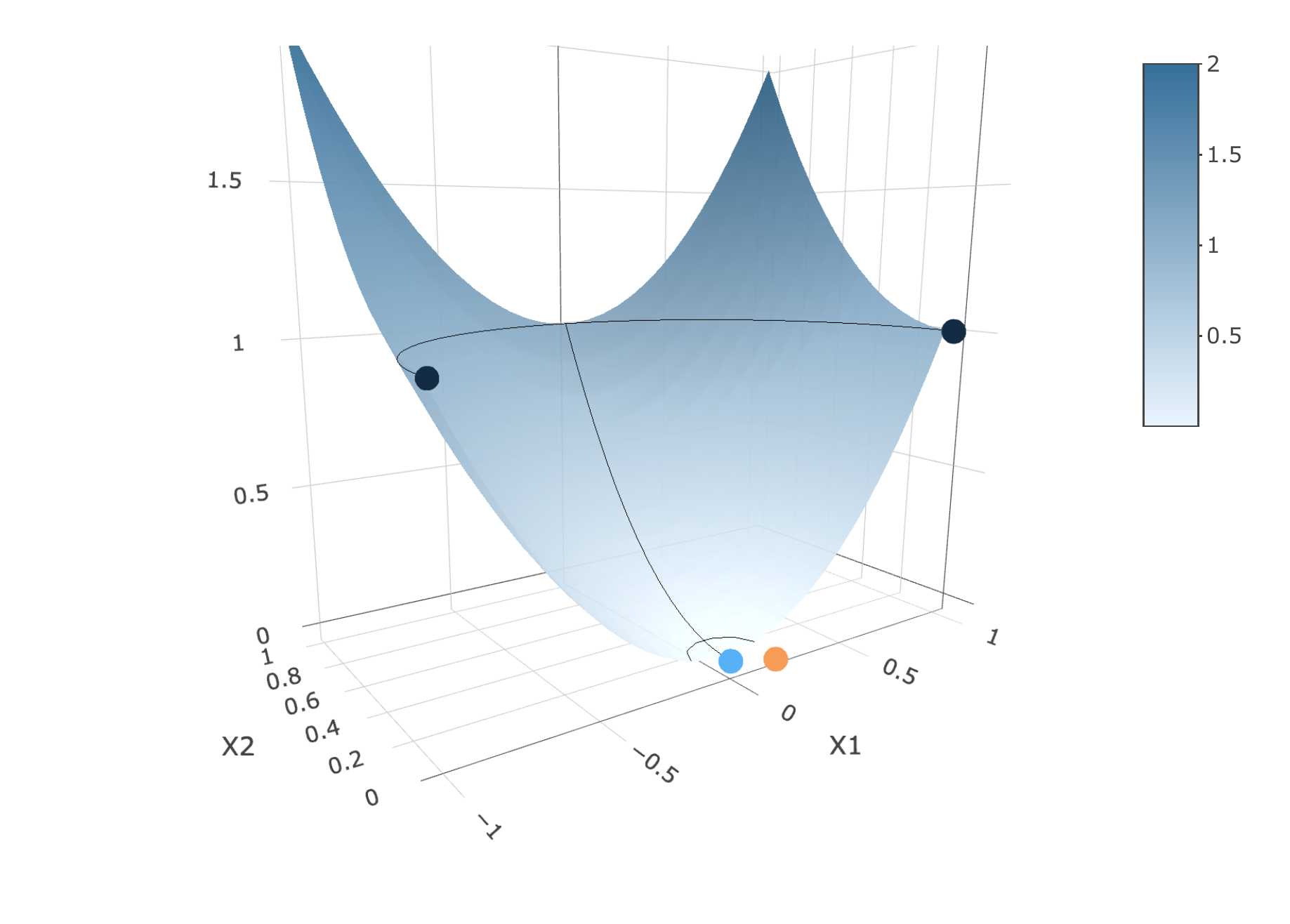}
			\caption{A response surface where synthetic controls would lead to poor estimates.}
			\label{fig:scm_response_surf}
		\end{subfigure}
		\caption{A simple example where synthetic controls (SCM) would perform worse than COT.}
		\label{fig:scm_counter}
	\end{figure}
	
	\paragraph{Nearest neighbor matching}
	When done with replacement, nearest neighbor matching (NNM) is also a reformulation of the COT problem with $\lambda=0$:
	$\min_{\mathbf{P} \geq 0} \;  \sum_{i: Z_i = z, j} \boldC_{ij}\boldP_{ij}$ subject to $\sum_{i,j}  \boldP_{ij} \indicator(Z_i = z)  = 1$ and $\boldP \trans \ones_{n} = \bolda.$
	This will seek to find the unit $i$ that is closest in terms of $\boldC$ for each unit $j$ since this minimizes the total cost, which is the definition of NNM with replacement. Each observation will simply be weighted by the number of times it is matched, divided by the total sample size $n$.
	
	NNM has some advantages and drawbacks relative to more general COT. Positive aspects of the method are that quick to estimate and also corresponds to an easily understood quantity of physical matching familiar to many researchers. However, the method can have poor convergence properties if $c \neq d_\mcX(\cdot, \cdot)^p$ where $p > d/2$ \citep{Fournier2015}. This means that additional assumptions are necessary to ensure adequate convergence. Further, the weights will be given as rational numbers and as such, we would expect estimators based on them to have higher variance than weights without such constraints.
	
	\paragraph{Optimal matching and MIP matching}
	Optimal Matching \citep{Rosenbaum1989} corresponds to COT weights with integer solutions and $\lambda = 0$:
	$\min_{\mathbf{P} \in \{0,1\}}   \sum_{i: Z_i = z, \,j} \boldC_{ij}\boldP_{ij}$ subject to $L \leq \sum_{i}  \boldP_{ij} \indicator(Z_i = z)  \leq U, \; \forall j$ and $1 \leq L \leq U \leq n.$ We can turn this into Mixed Integer Program (MIP) matching by adding constraints: \[\left | \sum_{ij} \frac{B_k(X_i) \boldP_{ij}\indicator(Z_i = z)}{n M_j } - \frac{1}{n} \sum_{j}  B_k(X_j) \right| \leq \delta_k, \, \, \forall k \in \{1,...,K\},\] where $M_j = \sum_{i} \boldP_{ij}$ is the number of matches for unit $j$ \citep{Zubizarreta2012}.
	These methods also imply a re-weighting of treatment group $Z=z$ since the weights on element $i$ will be $n \inv \sum_j \boldP_{ij} /M_j$. 
	
	These methods have a similar flavor to NNM but with additional linear constraints. As such, they would likely share some of its benefits and drawbacks. As an advantage, these methods again yields matches which have an easy interpretation; however, as a disadvantage, the additional linear constraints will slow down the problem estimation. Moreover, the problem has additional tuning parameters $L$, $U$, and $\delta_k$ not present in NNM. Similar to NNM, we would again expect weights based on rational numbers to have higher variance.


	\paragraph{Energy distance.}
	The Energy Distance (ED) is defined as 
	$\mcE(\bolda_z, \bolda) = \frac{2}{n^2} \sum_{i:Z_i = z, \,j} d_\mcX(x_i, x_j)^p$
	$- \frac{1}{n^2} d_\mcX(x_i, x_i)^p
	- \frac{1}{m^2} d_\mcX(x_j, x_j)^p,$
	where $p \geq 1$. Then $S_\lambda(\bolda_z, \bolda)\to \frac{1}{2} \mcE(\bolda_z, \bolda)$ as $\lambda \to \infty$ \citep{Feydy2018}. Thus, Energy Balancing Weights \citep{huling_energy_2020} are a special case of COT.
	
	One advantage of this method is that there is no tuning parameter necessary to estimate the weights. However, we might assume that there would be an advantage to interpolating between all of these various methods. Indeed, in our experiments we find that the larger values of $\lambda$ do not approximate the true inverse propensity score as well as intermediate values. See Appendix \ref{sec:tun_alg} in the Supplementary Materials for an empirical evaluation.

	\paragraph{Mean Maximum Discrepancy.} Optimal transport is also related to the mean maximum discrepancy (MMD) through the ED. The MMD is equal to $\mcM = 0.5 \int_{\mcX \times \mcX} k(x, x') d \phi(x),$
	for $\phi = \alpha - \beta$. For some reproducing kernel Hilbert space $k$ and for a distance $d_\mcX$ defined as 
	$d_\mcX(x, x') = \frac{1}{2} k(x, x') + \frac{1}{2} k(x, x') - k(x, x')$,  MMD is equivalent to the ED \citep{Feydy2018} and, therefore, to COT. To our knowledge, there has not been a proposed weighting method based on the MMD but we would expect it to have performance similar to that of the ED and COT with large values of $\lambda$.
	
	\section{Simulation Study}\label{sec:sims}
	To evaluate the finite sample performance of the proposed weighting methodology, we use the simulation study originally presented in \citet{Hainmueller2012}. For each setting, we run 1000 experiments with a sample size of $n = 512$. The estimand of interest is the ATE.
	
	For additional experiments examining the convergence of COT, its confidence interval coverage, and the efficacy of Algorithm \ref{alg:bootwass}, see Appendix \ref{sec:emp_studies} of the Supplementary Materials.
	
	\subsection{Setup}
	\paragraph{Study design.}
	We generate six covariates $X_1, ..., X_6$ from the following distributions
	\begin{align*}
		\begin{bmatrix} X_1 \\ X_2 \\ X_3 \end{bmatrix} &\sim \N \left( \begin{bmatrix} 0 \\ 0 \\ 0 \end{bmatrix}, \begin{bmatrix} 2 & 1 & -1 \\  1 & 1 & -0.5 \\ -1 & -0.5 & 1 \end{bmatrix} \right)\\
		X_4 &\sim \Un(-3,3)\\
		X_5 &\sim \chi^2_1\\
		X_6 &\sim \Bern{0.5}.
	\end{align*}
	In this study, the last three covariates are mutually independent of each other and also of the first three covariates.
	
	The treatment indicator is generated as $Z = \indicator(X_1 + 2 X_2 - 2 X_3 - X_4 - 0.5 X_5 + X_6 + \nu > 0),$
	where $\nu$ is drawn from one of three distributions leading to different degrees of overlap:
	$\nu \sim \N(0,100)$ for high overlap, $\nu \sim g(\chi_5^2)$ for medium overlap, and $\N(0,30)$ for low overlap.
	The function $g$ in the medium-overlap setting gives the $\chi_5^2$ draws expectation 0.5 and variance 67.6. 
	We expect the scenarios that will lead to the highest bias to be in the low-overlap setting where there is a strong separation between treatment groups and in the medium-overlap setting where the errors are leptokurtic.
	
	Given the treatment indicator $Z $ and the covariates $X_1,...,X_6$, we draw the outcome $Y$ from $Y(0) = Y(1) = (X_1 + X_2 +X_5 )^2 + \eta,$
	with $\eta \sim \N(0,1)$. 
	There are two things to note about this outcome model. The first is  that there is no effect of the treatment at the unit level and hence the ATE is 0. The second is that a linear outcome model should be biased.
	
	\paragraph{Methods under examination.}
	We compare our methodology to several other weighting methods commonly used in the literature. The first method we consider is a logistic regression (GLM) using only first order terms. We also consider balancing methods such as the covariate balancing propensity score (CBPS) of \citet{Imai2014} and the stable balancing weights (SBW) of \citet{Zubizarreta2015} both targeting mean balance. Finally, we also utilize SCM and NNM.
	
	For the 
	COT 
	weights, we include two variations both using an $L_2$ metric with standardized covariates. The first only balances the joint distribution (no constraints or ``none'') and the second demonstrates basis function balancing by targeting the joint distribution as well as mean balance (``means''). 
	
	\subsection{Estimators} \label{sec:estimators}
	In our simulations, we consider three estimators to target the ATE. 
	The first is known as the H\'{a}jek estimator \citep{Hajek1988} and is simply a weighted mean with sum to one weights as in Eq. \eqref{eq:hajek_init}.
	The second is an augmented or doubly robust estimator of \citet{Robins1994}, and the third is a weighted least squares estimator---both only including linear terms of the covariates.
	We do not include a barycentric projection estimator like in Eq. \eqref{eq:map} since under an $L_2$ metric it gives the same result as the H\'ajek estimator (see Proposition \ref{prop:l2_map}).

	\subsection{Results}
	
	
	We now turn our attention to the results. Due to its nonparametric to semiparametric formulation, we expect COT to do better than other methods when the true propensity score model diverges from a logistic regression---\textit{e.g.}, in the medium-overlap scenario. 
	
	Indeed, the COT methods have the lowest RMSE across all overlap scenarios, as we can see in Table \ref{tab:hain}. Further, across the medium and low overlap settings, COT has the lowest bias as well; in the high overlap scenario, there is a negligible difference between COT, SBW, and GLM.
	
	We should also note that COT gives estimates that do not vary between estimators. This is because the COT already balances the basis functions used in the augmented estimator and weighted least squares. Therefore, there is no difference between running an outcome regression utilizing linear terms of the covariates and the H\'{a}jek estimator. This is a function of the fact that by balancing distributions of the covariates, COT will also balance functions of the covariates. Similar phenomenon can also be observed with SBW.
	
\begin{table}[!htb]
\centering
\begingroup\fontsize{9pt}{10pt}\selectfont
\begin{tabular}{lll|rrr|rrr}
  \hline & & & \multicolumn{3}{c}{Bias} & \multicolumn{3}{|c}{RMSE}\\ \hline
overlap & method & constraint & Hajek & DR & WOLS & Hajek & DR & WOLS \\ 
  \hline
high & GLM & none & \textbf{-0.01} & \textbf{-0.01} & -0.02 & 1.18 & 1.14 & 1.14 \\ 
   & CBPS & means &  0.24 & \textbf{-0.01} & -0.02 & 1.12 & 1.11 & 1.09 \\ 
   & SBW & means & \textbf{-0.01} & \textbf{-0.01} & \textbf{-0.01} & 1.00 & 1.00 & 1.00 \\ 
   & SCM & none &  0.36 &  0.27 &  0.28 & 1.63 & 1.57 & 1.55 \\ 
   & NNM & none &  0.43 &  0.32 &  0.28 & 0.69 & 0.65 & 0.56 \\ 
   & COT & none & \textbf{ 0.01} & \textbf{ 0.01} & \textbf{ 0.01} & 0.61 & 0.61 & 0.61 \\ 
   &  & means & \textbf{ 0.01} & \textbf{ 0.01} & \textbf{ 0.01} & \textbf{0.42} & \textbf{0.42} & \textbf{0.42} \\ 
   \hline
medium & GLM & none &  1.12 &  1.10 &  1.04 & 1.72 & 1.69 & 1.70 \\ 
   & CBPS & means &  1.20 &  1.06 &  0.95 & 1.72 & 1.64 & 1.56 \\ 
   & SBW & means &  0.63 &  0.63 &  0.63 & 1.20 & 1.20 & 1.20 \\ 
   & SCM & none &  1.19 &  1.12 &  1.10 & 2.05 & 1.97 & 1.95 \\ 
   & NNM & none &  0.73 &  0.65 &  0.58 & 0.94 & 0.91 & 0.79 \\ 
   & COT & none &  0.23 &  0.23 &  0.23 & 0.74 & 0.74 & 0.74 \\ 
   &  & means & \textbf{-0.03} & \textbf{-0.03} & \textbf{-0.03} & \textbf{0.43} & \textbf{0.43} & \textbf{0.43} \\ 
   \hline
low & GLM & none &  0.19 &  0.06 &  0.02 & 1.72 & 1.49 & 1.51 \\ 
   & CBPS & means &  0.45 &  0.06 &  0.01 & 1.42 & 1.46 & 1.42 \\ 
   & SBW & means &  0.03 &  0.03 &  0.03 & 1.03 & 1.03 & 1.03 \\ 
   & SCM & none &  0.64 &  0.42 &  0.43 & 1.75 & 1.69 & 1.65 \\ 
   & NNM & none &  0.81 &  0.56 &  0.49 & 1.02 & 0.89 & 0.77 \\ 
   & COT & none &  0.05 &  0.05 &  0.05 & 0.85 & 0.85 & 0.85 \\ 
   &  & means & \textbf{ 0.00} & \textbf{ 0.00} & \textbf{ 0.00} & \textbf{0.41} & \textbf{0.41} & \textbf{0.41} \\ 
   \hline
\hline
\end{tabular}
\endgroup
\caption{Performance of various weighting methods under the simulation settings of \cite{Hainmueller2012}. Bold values are the values with the lowest bias or root mean-squared error (RMSE) of the methods under the same conditions. GLM refers to weighting by the inverse of the propensity score as calculated from a logistic regression model, CBPS is the covariate balancing propensity score, SBW is the stable balancing weights, SCM is the synthetic control method, and COT is the optimal transport formulation proposed in this paper. The estimators are Hajek weights (Hajek), doubly-robust augmented IPW (DR), and weighted least squares (WOLS). All weights are normalized to sum to 1. Constraints refer to balancing constraints and are one of ``none'' for no constraints or ``mean'' for mean constraints.} 
\label{tab:hain}
\end{table}

	\section{Case Study}\label{sec:case}
	In this section, we apply our methodology to a real data set. There is growing interest in the literature to utilize libraries of 
	randomized control trials (RCTs) to evaluate new interventions, the idea being that running new RCTs to evaluate every new intervention is expensive and time consuming \citep{Schmidli2020}. These studies, alternatively called externally controlled trials or synthetic control group trials, compare a set of study subjects receiving a treatment to a group of individuals external to the trial at hand who did not receive the intervention of interest. The participants used for the control group can be taken from a variety of sources such as an observational study, electronic medical records, or from historical clinical trial data \citep{Davi2020}.  
	To demonstrate this in practice, we present an analysis utilizing data originally from a multi-site RCT  discussed by \citet{Blum2010}. 
	
	\subsection{Misoprostol for the Treatment of Postpartum Hemorrhage}
	The original study was a double-blind, non-inferiority trial that exposed 31,055 women to prophylactic oxytocin during labor at five hospitals across Burkina Faso, Egypt, Turkey, and Vietnam. The 807 women in this group with uncontrolled blood-loss after delivery---a condition known as post-partum hemorrhage or PPH---were then randomized to receive either 800 milligrams misoprostol (treatment condition) or 40 international units oxytocin (control condition). There were 407 and 402 women in each treatment group, respectively. The primary outcome for the study was whether blood loss was controlled within 20 minutes after PPH diagnosis. The authors measure several important confounders like maternal age, blood loss at treatment, whether cord traction was maintained, maternal hemoglobin, whether the mother is currently married, whether the cord was clamped early, fetal gestational age, whether labor was augmented, whether labor was induced, maternal education, number of previous live births, whether the placenta was delivered prior to hemorrhage, and whether a uterine massage was given.

	\subsection{Modifications and methods}
	We modify the study in a couple of ways to make it more similar to an externally controlled trial. For each site, we separate the paired treatment groups and attempt to estimate effects using the units from other sites---\textit{e.g.}, for the misoprostol group in  Egypt we remove the oxytocin group from Egypt and attempt to estimate a causal effect using the oxytocin groups from the other sites. In this manner, we generate effect estimates and confidence intervals for each treatment group at each site.
	
	To estimate the treatment effects, we use COT with hyperparameter tuning as in Algorithm \ref{alg:bootwass} and the squared-Euclidean distance as the cost function.  We also compare COT to GLM, CBPS, SBW, SCM, and NNM. For GLM, CBPS, and SBW we utilize all first and second covariate moments. Finally, our estimator is that of Eq. \eqref{eq:hajek_init}.
	
	For our estimates to be valid, we require that there be no unmeasured confounding but also that the estimates are ``transportable,'' \textit{i.e.} we are able to take estimates from the other hospitals in the external group and ``transport'' them to site of interest. This requires that conditional on the observed covariates there are no other variables that can effect the outcome and treatment indicator ($d$-separation holds, \citealp{Pearl2013}).

	\subsection{Case study results}
	Amazingly, the three optimal transport flavored methods are able to achieve estimates close to the original treatment effects on average.
	In Figure \ref{fig:pph}, we can see that COT, NNM, and SCM do the best job in terms of average bias; moreover, these are the only methods that have estimates inside the original confidence interval. Of these, we note that COT has the least overall bias across all treatment groups and sites.
	
	In terms of inference, COT also performs best across all sites and treatment groups. Table \ref{tab:pph} displays how well the calculated confidence intervals cover the original treatment effects and also if the calculated estimates are inside the original confidence interval. In both cases, COT has the highest percentage of confidence intervals covering the original treatment effect and estimates inside the original confidence interval at $60\%$ each, respectively. And while NNM had a good overall bias, only 20\% of its confidence intervals covered the true treatment effect and its estimates were inside of only 20\% of the original confidence intervals.

	\begin{figure}[tb]
		\centering
		\includegraphics[width =\textwidth]{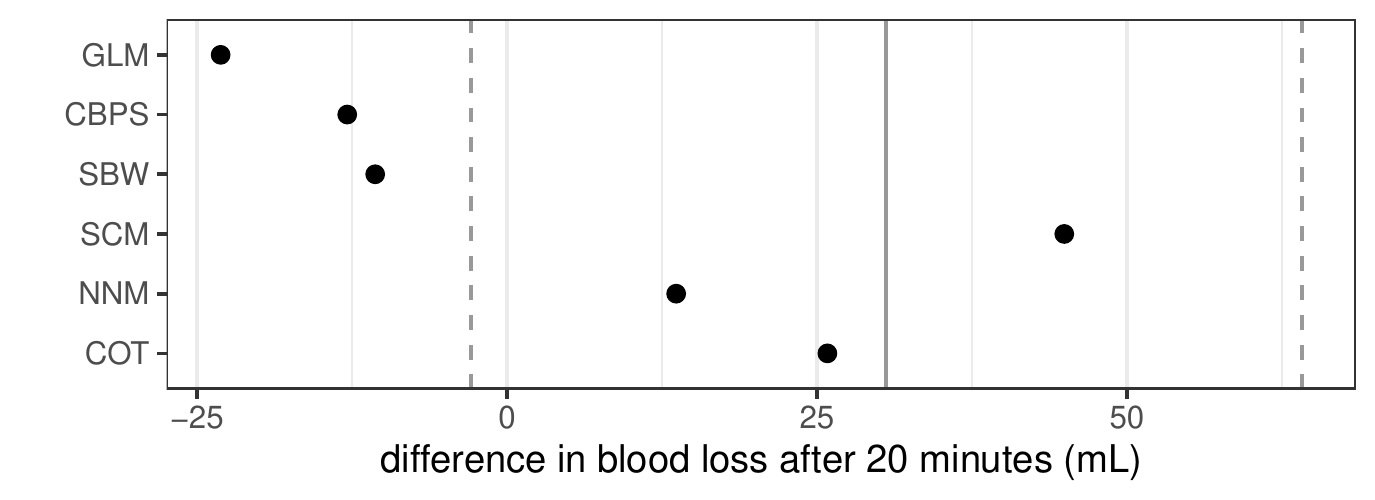}
		\caption{Results for treatment effect estimation averaged across treatment groups and study sites. The gray vertical line is the original treatment effect estimate for the entire study while the dotted vertical lines are the original confidence interval. The weighting methods under examination are logistic regression (GLM), Covariate Balancing Propensity Score (CBPS), Stable Balancing Weights (SBW), Synthetic Control Method (SCM), Nearest Neighbor Matching (NNM), and Causal Optimal Transport (COT).}
		\label{fig:pph}
	\end{figure}
	
\begin{table}[ht]
\centering
\begin{tabular}{l|p{1.75in}p{1.75in}}
  \hline
Method & \% C.I. covering original effect & \% of estimates in original C.I. \\ 
  \hline
GLM & 20 & 20 \\ 
  CBPS & 30 & 30 \\ 
  SBW & 30 & 40 \\ 
  SCM & 50 & 50 \\ 
  NNM & 20 & 20 \\ 
  COT & 60 & 60 \\ 
   \hline
\end{tabular}
\caption{For each method, the table displays the percentage of times that the calculated 95\% confidence interval (C.I.) covered the true treatment effect and whether the estimated treatment effect was inside the original C.I. from the study. The weighting methods under examination are logistic regression (GLM), Covariate Balancing Propensity Score (CBPS), Stable Balancing Weights (SBW), Synthetic Control Method (SCM), Nearest Neighbor Matching (NNM), and Causal Optimal Transport (COT).} 
\label{tab:pph}
\end{table}

	%
	
	
	\section{Summary and Remarks}\label{sec:summary}
	We have described a new tool for the estimation of causal effects in observational studies: Causal Optimal Transport. 
	This method allows for checks of distributional overlap and model-free weight estimation that is semiparametrically efficient.
	We also showed how several other methods for causal inference are closely related to COT and that COT can be seen as an interpolation between these methods. 
	In our simulation study, we demonstrated that this methodology performs well even when both the outcome and propensity score models are misspecified. Compared to other common weighting methods, COT generally has lower bias and lower root mean-squared error under model misspecification. 
	
	There are several areas for future research. 
	First, the sensitivity of COT to the choice of cost function remains to be elucidated.  
	Second, selecting covariates through typical model selection frameworks such as an $L_1$ penalized regression is not obvious given that COT does not generate clear predictive models, though this may not matter given the nonparametric nature of the weights. 
	Third, more work needs to be done to extend this framework to time series data, but further connections to SCM may offer a way forward.

	\if0\blind {
		\section*{Acknowledgments}
		The author would like to thank Claire Chaumont, Gang Liu, Aar\'on Sonabend, Lorenzo Trippa, and Jos\'e Zubizarreta for helpful comments and feedback on an earlier version of this manuscript.
		%
		This research was funded by generous support from NIH grant 5T32CA009337-40, the Department of Biostatistics at the Harvard T.H. Chan School of Public Health, and the David Geffen Scholarship from the David Geffen School of Medicine at UCLA. 
	} \fi
	
	\section*{Supplementary Materials}
	The Supplementary Materials contain proofs of the theorems in the paper, further empirical studies, an additional case study, and proofs for other formulations of COT. The last two s Proofs are found in Appendix \ref{sec:proofs}, and empirical studies of the convergence, confidence interval coverage, and efficacy of the tuning algorithm are located in Appendix \ref{sec:emp_studies}. Appendix \ref{sec:lalonde} is the additional case study and Appendix \ref{sec:other_ot} are the proofs for other formulations of COT.
	
	\printbibliography
	
	\newpage
	\appendix
	
	
	\section{Proofs} \label{sec:proofs}
	In this section we offer our proofs of the theorems and propositions stated in the paper.
	
	\subsection{Proof of Theorem \ref{thm:conv_cot}} \label{sec:conv_proof}
	In this section, we prove Theorem \ref{thm:conv_cot} which establishes the convergence of the COT weights. First, we need the following lemma.
	
	\begin{lemma}[The importance sampling weights converge to $\alpha$]\label{lemm:is_conv}
		Let Assumption \ref{assum:si} holds. Define the importance sampling weights as $\breve{w}_i^\star = \frac{d\alpha_z(X_i)}{d\alpha(X_i)}$ and define the self-normalized importance sampling weights as $w_i^\star =  \frac{1}{n}\breve{w}_i^\star/\sum_{i} \frac{1}{n}\breve{w}_i^\star.$ Then, 
		\[\boldw^\star \rightharpoonup \alpha.\]
	\end{lemma}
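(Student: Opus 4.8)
The plan is to prove the weak convergence by testing against bounded continuous functions. By the portmanteau theorem, it suffices to show that for every bounded continuous $g : \mcX \mapsto \R$,
\[
\int g \, d\boldw^\star = \sum_{i : Z_i = z} g(X_i)\, w_i^\star \;\longrightarrow\; \int g \, d\alpha .
\]
Writing out the self-normalization, and incorporating the implicit restriction to the group-$z$ units that carry mass, this sum is the ratio
\[
\int g \, d\boldw^\star
= \frac{\tfrac{1}{n}\sum_{i=1}^n \indicator(Z_i = z)\, g(X_i)\, \breve{w}_i^\star}
       {\tfrac{1}{n}\sum_{i=1}^n \indicator(Z_i = z)\, \breve{w}_i^\star},
\]
so the argument reduces to identifying the limits of numerator and denominator separately. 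I will take the weight to be the Radon--Nikodym derivative $\breve{w}_i^\star = \tfrac{d\alpha}{d\alpha_z}(X_i)$ that reweights $\alpha_z$-draws toward $\alpha$, consistent with the definition given in the main text.

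First I would handle the denominator. The summands $\indicator(Z_i = z)\,\breve{w}_i^\star$ are i.i.d.\ with finite mean, so the strong law of large numbers gives a.s.\ convergence to $\E[\indicator(Z = z)\,\breve{w}^\star]$. Using that the conditional law of $X$ given $Z = z$ is $\alpha_z$ and that $\breve{w}^\star = d\alpha/d\alpha_z$, the change-of-measure identity yields
\[
\E[\indicator(Z = z)\,\breve{w}^\star]
= \p(Z = z)\,\E\{ \tfrac{d\alpha}{d\alpha_z}(X) \given Z = z \}
= \p(Z = z)\int \frac{d\alpha}{d\alpha_z}\, d\alpha_z
= \p(Z = z).
\]
The identical computation for the numerator, now carrying the extra bounded factor $g$, converges a.s.\ to $\p(Z = z)\int g\, d\alpha$. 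Dividing, the constant $\p(Z = z)$ cancels and the ratio tends a.s.\ to $\int g\, d\alpha$, as required. Here the overlap half of Assumption \ref{assum:si}, $0 < \p(Z = z \given X) < 1$, is exactly what guarantees $\alpha \ll \alpha_z$ so that $\breve{w}^\star$ is finite and the change of measure is legitimate; note $\breve{w}^\star \propto 1/\p(Z = z \given X)$, matching the expression quoted in the main text.

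The main obstacle is less the pointwise computation than upgrading ``convergence for each fixed $g$'' to genuine weak convergence of the random measure $\boldw^\star$. Two points need care. First, integrability: for bounded $g$ one has $\int |g|\,\breve{w}^\star\, d\alpha_z \le \|g\|_\infty < \infty$, which is what licenses the strong law, and this finiteness again rests on the overlap condition. Second, to obtain \emph{almost sure} weak convergence rather than a separate null set for each $g$, I would restrict to a countable convergence-determining class of bounded continuous functions on $\mcX \subseteq \R^d$; the strong law then holds simultaneously off a single null set, and convergence on this class implies $\boldw^\star \rightharpoonup \alpha$ a.s. If only convergence in probability is needed, this last step is immediate and the determining-class argument can be dropped.
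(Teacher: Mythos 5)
Your proposal is correct and follows essentially the same route as the paper, which simply delegates your SLLN ratio argument to Theorem 9.2 of Owen (2013) on self-normalized importance sampling (testing indicator functions $\indicator(X \in E)$ where you test bounded continuous ones); you also rightly read $\breve{w}_i^\star$ as $\frac{d\alpha}{d\alpha_z}(X_i)$, correcting the inverted ratio in the lemma statement, with overlap in Assumption \ref{assum:si} guaranteeing $\alpha \ll \alpha_z$ exactly as you say. Your closing step---passing to a countable convergence-determining class so the almost-sure statement holds off a single null set---is more careful than the paper's ``and the result follows,'' but it does not change the substance of the argument.
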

	\begin{proof}
		By Assumption \ref{assum:si}, $\breve{w}_i^\star$ exists for all $i$. Then define $\p_n (X \in E) = \sum_i \indicator(X_i \in E) w_i^\star$, for some $E \subset \mcX$.
		Take $f(X) = \indicator(X \in E)$. By Theorem 9.2 in \citet{Owen2013h}, $\p_n (X \in E) = \sum_i f(X_i) w_i^\star \as \E_\alpha(f(X)) = \int_\mcX f \, d \alpha = \p_\alpha(X \in E),$ where $\p_\alpha(\cdot)$ denotes the probability of $X \in E$ when $X \sim \alpha$. Thus we have $\lim_{n \to \infty} \p_n(X \in E) = \p_\alpha(X \in E)$ and the result follows.
	\end{proof}

	Now we are ready to proceed.
	\begin{proof}
		Under Assumption \ref{assum:si}, $\boldw^\star$ exists and by Lemma \ref{lemm:is_conv}, $\boldw^\star \rightharpoonup \alpha$.
		By Assumptions \ref{assum:conv} and \ref{assum:conv_rootn}, $S_\lambda$ is convex in its entries and metrizes the convergence in measure \citep{Feydy2018,janati_debiased_2020}. Thus,
		$0 \leq S_\lambda(\boldw_{\text{COT}}, \bolda) \leq S_\lambda(\boldc, \bolda) $ for all $\boldc \in \Delta_n$ that meet the chosen balancing constraints, $\delta$, which includes $\boldw^\star$ for a large enough $n$ and $m$. Hence,
		$0 \leq S_\lambda(\boldw_{\text{COT}}, \bolda) \leq S_\lambda(\boldw^\star, \bolda)$ and 
		since $\boldw^\star \rightharpoonup \alpha$,
		$S_\lambda(\boldw^\star, \bolda ) \to 0 \Rightarrow S_\lambda(\boldw_{\text{COT}}, \bolda) \to 0 \Rightarrow \boldw_{\text{COT}} \rightharpoonup \alpha $.
		
		We then take the same inequality and modify it slightly:
		$0 = S_\lambda(\bolda, \bolda) \leq S_\lambda(\boldw_{\text{COT}}, \bolda) \leq S_\lambda(\boldw^\star, \bolda).$
		If we add $\ot{\alpha}{\alpha}{\lambda} - \ot{\alpha}{\alpha}{\lambda}$ to each term and rearrange, we get
		\begin{align*}
			\frac{1}{2}\left\{\ot{\bolda}{\bolda}{\lambda} -  \ot{\alpha}{\alpha}{\lambda} \right\} &\leq 
			\ot{\boldw_{\text{COT}}}{\bolda}{\lambda} - \frac{1}{2} \ot{\boldw_{\text{COT}}}{\boldw_{\text{COT}}}{\lambda} - \frac{1}{2} \ot{\alpha}{\alpha}{\lambda} \\
			&\leq
			\ot{\boldw^\star}{\bolda}{\lambda} - \frac{1}{2} \ot{\boldw^\star}{\boldw^\star}{\lambda} - \frac{1}{2} \ot{\alpha}{\alpha}{\lambda}.
		\end{align*}
		For $n$  large enough, the terms in the last two inequalities will be approximately equal to 
		$\frac{1}{2}\{\ot{\boldw_{\text{COT}}}{\bolda}{\lambda} - \ot{\alpha}{\alpha}{\lambda} \}$ and $\frac{1}{2}\{\ot{\boldw^\star}{\bolda}{\lambda}-  \ot{\alpha}{\alpha}{\lambda} \},$ respectively.
		Therefore,
		\begin{align*}
			\frac{1}{2}\left\{\ot{\bolda}{\bolda}{\lambda} -  \ot{\alpha}{\alpha}{\lambda} \right\} &\leq 
			\frac{1}{2}\{\ot{\boldw_{\text{COT}}}{\bolda}{\lambda} - \ot{\alpha}{\alpha}{\lambda} \} \\
			&\leq
			\frac{1}{2}\{\ot{\boldw^\star}{\bolda}{\lambda} -  \ot{\alpha}{\alpha}{\lambda} \} 
		\end{align*}
		
		Under Assumptions \ref{assum:conv}--\ref{assum:conv_rootn}, Theorem 1 of \citet{Genevay2018} or Corollary 1 of \citet{Mena2019} hold and 
		$\E \{\ot{\bolda}{\bolda}{\lambda} -  \ot{\alpha}{\alpha}{\lambda} \} = \mcO\left( \frac{1}{\sqrt{n}}\right) $ and
		$\E \{\ot{\boldw^\star}{\bolda}{\lambda} -  \ot{\alpha}{\alpha}{\lambda}\}  = \mcO\left( \frac{1}{\sqrt{n}}\right).$ Thus,
		$\E \{\ot{\boldw_{\text{COT}}}{\bolda}{\lambda} - \ot{\alpha}{\alpha}{\lambda}\} = \mcO\left( \frac{1}{\sqrt{n}}\right)$.

	\end{proof}
	
	\subsection{Proof of Theorem \ref{thm:vopt}}\label{sec:vopt_proof}
	\begin{proof}
		Define $e_{z,i} = \p(Z_i = z \given X_i, S_i = 1) \p(S_i = 1 \given X_i)/\p(S_i = 0 \given X_i)$, or the inverse weight targeting the TATE. This is also the inverse of the Radon-Nikodym derivative. As a reminder, $\mu_z(X) = \E\{Y(z) \given X\}$. 
		We first decompose $\hat{\tau} - \tau$ into several residual terms:
		\begin{align*}
			\hat{\tau} - \tau &= \sum_{i} w_i Z_i Y_i -  \sum_{i} w_i (1-Z_i) Y_i - \tau\\
			&= \sum_{i} w_i Z_i \{Y_i - \mu_1(X_i)\} -  \sum_{i} w_i (1-Z_i) \{Y_i - \mu_0(X_i)\} \\
			&\;\;\;\;\;\;\;\; + \sum_{i} w_i Z_i \mu_1(X_i) -  \sum_{i} w_i (1-Z_i) \mu_0(X_i)- \tau \\
			& = \sum_{i} w_i Z_i \{Y_i - \mu_1(X_i)\} -  \sum_{i} w_i (1-Z_i) \{Y_i - \mu_0(X_i)\} \\
			&\;\;\;\;\;\;\;\;+ \sum_{i} w_i Z_i \mu_1(X_i) -  \sum_{i} w_i (1-Z_i) \mu_0(X_i)- \tau \\
			&= \frac{1}{n}\sum_{i}  \frac{Z_i}{e_{1,i}} \{Y_i - \mu_1(X_i)\} -  \frac{1}{n}\sum_{i} \frac{1-Z_i}{e_{0,i}} \{Y_i - \mu_0(X_i)\} \\ 
			&\;\;\;\;\;\;\;\; + \sum_{i} \left(w_i - \frac{1}{n \cdot e_{1,i}}\right) Z_i \{Y_i - \mu_1(X_i)\} \\
			&\;\;\;\;\;\;\;\; -  \sum_{i} \left(w_i - \frac{1}{n \cdot e_{0,i}}\right) (1-Z_i) \{Y_i - \mu_0(X_i)\} \\ 
			&\;\;\;\;\;\;\;\;+ \sum_{i} w_i Z_i \mu_1(X_i) -  \sum_{i} w_i (1-Z_i) \mu_0(X_i)\\
			&\;\;\;\;\;\;\;\; - \frac{1}{m}\left\{\sum_{j} \mu_1(X_j) -  \sum_{j} \mu_0(X_j) \right\}\\
			&\;\;\;\;\;\;\;\;+ \frac{1}{m}\left\{\sum_{j} \mu_1(X_j) -  \sum_{j} \mu_0(X_j) \right\} - \tau \\
			&= A + B + C,
		\end{align*}
		where 
		\begin{align*}
			A &= \frac{1}{n}\sum_{i}  \frac{Z_i}{e_{1,i}} \{Y_i - \mu_1(X_i)\} -  \frac{1}{n}\sum_{i} \frac{1-Z_i}{e_{0,i}} \{Y_i - \mu_0(X_i)\} \\ 
			&\;\;\;\;\;\;\;\;+ \frac{1}{m}\left\{\sum_{j} \mu_1(X_j) -  \sum_{j} \mu_0(X_j) \right\} - \tau\\
			B &=  \frac{1}{n} \sum_{i} \left(n \cdot w_i - \frac{1}{ e_{1,i}}\right) Z_i \{Y_i - \mu_1(X_i)\} \\
			&\;\;\;\;\;\;\;\; -   \frac{1}{n} \sum_{i} \left(n \cdot w_i - \frac{1}{  e_{0,i}}\right) (1-Z_i) \{Y_i - \mu_0(X_i)\}\\
			C & = \sum_{i} w_i Z_i \mu_1(X_i) -  \sum_{i} w_i (1-Z_i) \mu_0(X_i)\\
			&\;\;\;\;\;\;\;\; - \frac{1}{m}\left\{\sum_{j} \mu_1(X_j) -  \sum_{j} \mu_0(X_j) \right\}.
		\end{align*}
		The goal is to show that both $n^{1/2} B$ and $n^{1/2}C$ are  $o_p(1)$. Then, since $A$ has the form of the semiparametrically efficient score function, the result follows.
		
		First, for $B$, we have that $\lim_{n \to \infty} \boldw = \lim_{n \to \infty} \boldw^\star$ by Corollary \ref{coro:cot_conv_is}. This also implies that $\lim_{n \to \infty} n \cdot \boldw_i = 1/e_i$ since by the Radon-Nikodym theorem, the Radon-Nikodym derivative is unique almost surely. To prove that $n^{-1/2} B$ goes to 0, it will be sufficient to prove that $\frac{\sqrt{n}}{n} \sum_{i} \left | \left(n \cdot w_i - \frac{1}{ e_{z,i}}\right) \indicator(Z_i = z) \left\{Y_i - \mu_z(X_i)\right\}  \right | \to 0$. We then have
		\begin{align*}
			\frac{\sqrt{n}}{n} \sum_{i} & \left | \left(n \cdot w_i - \frac{1}{ e_{z,i}}\right) \indicator(Z_i = z) \left\{Y_i - \mu_z(X_i)\right\}  \right | \\
			&\leq \frac{\sqrt{n}}{n} \sum_{i} \left | \left(n \cdot w_i - \frac{1}{ e_{z,i}}\right) \indicator(Z_i = z) \right |  \left |Y_i - \mu_z(X_i)  \right |\\
			&\leq  \operatorname{ess.} \sup_i \left | \left(n \cdot w_i - \frac{1}{ e_{z,i}}\right) \indicator(Z_i = z) \right | \frac{1}{\sqrt{n}} \sum_{i}   \left |Y_i - \mu_z(X_i)  \right |,
		\end{align*}
		where $\operatorname{ess.} \sup$ is the essential supremum.
		The essential supremum quantity goes to 0 as a consequence of Corollary \ref{coro:cot_conv_is} while the residual quantity has finite expectation and variance by assumption. This implies $\frac{1}{\sqrt{n}} \sum_{i}   \left |Y_i - \mu_z(X_i)  \right | \dist L$ for some random variable $L$ and that the desired quantity goes to 0 by Slutsky's theorem.
		
		Second, for $C$, we have by assumption that $S_\lambda(\boldw, \bolda) = o_p(n^{-1/2})$. 
		This implies that the empirical expectations also converge at a faster than $\sqrt{n}$-rate. Alternatively, for the basis function constraints, we have \[\left| \E_\boldw\{\indicator(Z = z)\mu_z(X)\} - \E_\bolda\{\mu_z(X)\} \right| \leq \sum_k \delta_k |\gamma_k| \leq \| \delta \|_2^2 \|\gamma\|_2^2.\] Thus, for each value $z$ of $Z$,
		$ \sqrt{n} \left| \E_\boldw\{\indicator(Z = z)\mu_z(X)\} - \E_\bolda\{\mu_z(X)\} \right| = o_p(1).$
		
		Finally, with Assumption \ref{assum:sutva}, we can replace $Y_i$ with $Y_i(Z_i)$. This means that by Assumption \ref{assum:sutva} and Assumption \ref{assum:outcome_reg}, $\E(A) = 0$ and $\lim_{n \to \infty}\E(\hat{\tau}) = \tau$. Thus, under Assumptions \ref{assum:sutva}--\ref{assum:outcome_reg} and by the fact that $A$ has the form of the semiparametrically efficient score function, $A$ converges to $\tau$ at a $\sqrt{n}$-rate and $\sqrt{n}(\hat{\tau} - \tau)$ has the desired asymptotic distribution. 
	\end{proof}

	\subsection{Proof of Theorem \ref{thm:bp_conv}} \label{sec:bp_proof}
	Before we proceed to our proof, a slight digression is necessary to discuss how the barycentric projection in the setting of this paper will differ slightly from the usual formulation in optimal transport. Typically in optimal transport problems, we would include all of the available data in our distance metric; however, the missing potential outcomes make this inadvisable. Simply throwing in the observed outcomes into the optimal transport problem  could lead to weights that bias treatment effects towards zero. Thus, we estimate the optimal transport plan only on the covariate data $X$ and then incorporate the outcomes after estimation of the transport plan.
	
	Further, we require the following definition.
	\begin{definition}
		The primal form of \eqref{eq:ot_reg} with an entropy penalty for generic measures $\alpha$ and $\beta$ on $\mcX$ is equivalent to 
		\begin{equation}
			\label{eq:ot_kl}
			\ot{\alpha}{\beta}{\lambda} = \inf_{\pi \in \boldU(\alpha, \beta)} \int_{\mcX \times \mcX} c(x,y)d\pi(x,y) + \lambda  \log\left(\frac{\pi(x,y)}{\alpha(x) \otimes \beta(y) } \right)d\pi(x,y).
		\end{equation}
		Eq. \eqref{eq:ot_kl} has the dual form
		\begin{equation}
			\label{eq:ot_dual}
			\sup_{f,g} \int_\mcX f(x) d\alpha(x) + \int_\mcX g(y) d \beta(y) - \lambda \int_{\mcX \times \mcX} e^{(f(x) + g(y) - c(x,y))/\lambda}d\alpha(x)d\beta(y) + \lambda,
		\end{equation}
		with a primal solution equal to $ e^{(f(x) + g(y) - c(x,y))/\lambda}d\alpha(x)d\beta(y)$.
		\label{def:dual_and_such}
	\end{definition}
	\noindent We note that the dual form is justified by Fenchel-Rockafellar duality but defer a proof to sources such as \citet{Peyre2019}. 
	
	\begin{proof}
		Denote $\lim_{n \to \infty} \boldw = \omega$ and $\lim_{n \to \infty} \bolda = \alpha$. By assumption, $\omega = \alpha$. Without loss of essential generality, assume the weights only adjust one treatment group towards the full sample. We also assume both groups have equal sample sizes since it will make some of the notation easier to follow.
		
		We will denote the optimal transportation plan between $\omega$ and $\alpha$ as $\pi$. The barycentric projection from $\alpha$ into $\omega$ is then $\int_\mcX x d \pi(x \given x')$, where $d\pi(x \given x') = d\pi(x, x')/d\alpha(x')$. 
		Under the assumptions of the theorem, this transport plan, $\pi$ is unique and is supported on the graph of a Monge map \citep{brenier_decomposition_1987}. Moreover, the barycentric projection  will be the optimal map, $T$ \citep[Lemma 12.2.3]{Ambrosio2005}. Finally, the optimal transport plans in finite samples will converge to the limiting value: $\boldP_n \to \pi$ \citep[Theorem 5.20]{villani_cedric_optimal_2008}. We also note that the Monge map $T$ will be the identity function since the distributions are the same.
		
		\paragraph{$\mu_z$ is like a Monge map for the outcomes.} 	
		For ease of exposition, we can think of units from $\omega$ as having their outcome $Y(z)$ observed, while the outcomes are completely missing for units in $\alpha$.
		Thus, we need some way of projecting an individual from $\alpha$ to $\omega$ and generating their hypothetical outcome.
		For an individual with covariate values $x'$, this  will be
		\[\int \begin{bsmallmatrix} y(z) \\ x \end{bsmallmatrix} dp(y(z)\given x) d \pi(x \given x') = \int \begin{bsmallmatrix} y \\ x \end{bsmallmatrix} dp(y\given x) d \pi(x \given x') =\int \begin{bsmallmatrix} \mu_z(x) \\ x \end{bsmallmatrix} d\pi(x \given x') =   \begin{bsmallmatrix} \mu_z(x') \\ x' \end{bsmallmatrix}, \] where
		under Assumptions \ref{assum:sutva}--\ref{assum:si}, the observed outcomes can be used for the potential outcomes and there is no interference.
		
		This means the Monge map is $\begin{bsmallmatrix} \mu_z(x') \\ x' \end{bsmallmatrix}$ in our particular case when we map from a space without outcomes to a space with outcomes. The failure of the values of $x'$ to change aligns with our assumption that the covariates precede treatment causally so their value does not change depending on the treatment group. Also, we note that since the outcomes are unknown, the best projection we can hope for is the conditional mean outcome function.
		
		Given this fact and that we do not care about projecting the covariates, we will focus on the barycentric projection of just the outcome:
		\[\bar{\pi}(x') = \int y dp(y \given x) d\pi(x \given x') =  \int \mu_z(x) d\pi(x \given x')= \mu_z(x') = T(x'). \]
		
		\paragraph{The estimated barycentric projection converges to $\mu_z$.} 
		Denote $\bar{\boldP}^\lambda$ as the distribution implied by the barycentric projection from the entropy-penalized optimal transport problem and $\bar{\pi} = \pi(\cdot \given x')$ as the optimal barycentric projection distribution.
		First,  we rewrite our equation to make the barycentric projections explicit. 
		Starting with the $L_2$ metric, we can represent $Y$ as the convolution of its conditional mean, $\mu_z$, with another random variable with mean zero, $\epsilon$:
		\begin{align*}
			\left\| \int y \dpl -  \int\mu_z \dpi \right\| &= \left \| \int (\mu_z + \epsilon) \dpl -  \int\mu_z \dpi  \right\| \\
			&\leq \left \| \int \mu_z (\dpl - \dpi) \right\| + \left \|\int \epsilon \dpl \right\| \\
			& \leq L \left \| \int x (\dpl - \dpi) \right\| + \left \|\int \epsilon \dpl \right\|.
		\end{align*}
		Then taking the square and expectation:
		\begin{align*}
			\E_\alpha \left\| \int y \dpl -  \int\mu_z \dpi \right\|^2 &\leq L^2 \E_\alpha \left \| \int x (\dpl - \dpi) \right\|^2 \\
			& \qquad + 2 L \E_\alpha\left \| \int x (\dpl - \dpi) \right\|\left \|\int \epsilon \dpl \right\| \\
			& \qquad +  \E_\alpha \left \|\int \epsilon \dpl \right\|^2\\
			&\leq L^2 \underbrace{\E_\alpha\left \| \int \mu_z (\dpl - \dpi) \right\|^2}_\text{E} \\
			& \qquad + 2L\underbrace{\sqrt{\E_\alpha\left \| \int \mu_z (\dpl - \dpi) \right\|^2} \sqrt{\E_\alpha\left \|\int \epsilon \dpl \right\|^2}}_\text{F} \\
			&\qquad +  \underbrace{\E_\alpha \left \|\int \epsilon \dpl \right\|^2}_\text{G}
		\end{align*}
		
		For the term in $G$, we have
		\begin{align*}
			G &= \E_\alpha \left(\sum_i \epsilon_i \overline{\boldP}_i^\lambda \right)^2\\
			&= \E_\alpha \left(\sum_i \sum_{i'} \epsilon_i \epsilon_{i'} \overline{\boldP}_i^\lambda \overline{\boldP}_{i'}^\lambda \right)\\
			&= \E_\alpha \E\left(\sum_i \sum_{i'} \epsilon_i \epsilon_{i'} \overline{\boldP}_i^\lambda \overline{\boldP}_{i'}^\lambda \given X,X'\right)\\
			&\leq \E_\alpha \left(\sum_i  \xi^2 (\overline{\boldP}_i^\lambda)^2\right).
		\end{align*}
		We then argue that this term goes to zero since $\overline{\boldP}_i^\lambda$ converges to $\overline{\pi}^\lambda$, which is without an atom.
		By the primal solution in Definition \ref{def:dual_and_such} and the structure of the penalty in of Eq. \eqref{eq:ot_reg}, we know $\overline{\boldP}_i^\lambda = e^{[f(x_i) + g(x') - c(x,x')]/\lambda}w_i < 1, \forall i$.
		Further, because $(\overline{\boldP}_i^\lambda)^2 < \overline{\boldP}_i^\lambda$, each value of the square is bounded away from one as well. 
		%
		%
		Additionally, $\overline{\boldP}_i^\lambda$ will converge to zero for all $i$ since $\pi^\lambda$ has a density with respect to $\omega$ and $\alpha$ \citep{Peyre2019}. This also implies that the empirical mean and variance of $\overline{\boldP}_i^\lambda$ will also go to zero and, thus, $\sum_i (\overline{\boldP}_i^\lambda)^2 \to 0$. 
		
		The rate of this convergence will be determined by the rate of convergence of the regularized optimal transport problem, \[K_d \lambda \left(1 + \frac{\sigma^{\lceil 5d/2\rceil + 6}}{\lambda^{d' + 3}}\right) n^{-1/2}\] for a constant $K_d$ depending only on the dimension $d$, $\sigma$ a constant determined by the subgaussian tail of the random variable, and $d' = \lceil 5d/4 \rceil$  \citep[Corollary 1]{Mena2019}. As long as $\lambda$ goes to zero slowly enough, the sum will converge to zero. This will be the case if we take 
		\begin{equation}
			\lambda \asymp n^{- \frac{1}{2 d' + 9}}.
			\label{eq:lambda_n_val}
		\end{equation}
		Lastly, 
		\[\E_\alpha \left(\sum_i  \xi^2 (\overline{\boldP}_i^\lambda)^2\right) \to 0\]
		by Lebesgue's Dominated Convergence Theorem.

		Turning next to the term in $E$,
		we have by Theorems 4 and 5 in \citet{pooladian_entropic_2021} that 
		\[E \lesssim \lambda^{1-d''/2} \log(n) n^{-1/2} + \lambda ^{(t +1)/2} + \lambda^2 I_0 + \lambda \inv(1 + \lambda^{1-d''/2})\log(n) n^{-1/2} ,\]
		where $I_0$ is the Fisher Information of $\alpha$, $d'' = 2 \lceil d/2 \rceil$, and $t \in [2, 3]$ is related to the number of derivatives that exist for the optimal Kantorovich potential \citep{Chizat2020, pooladian_entropic_2021}. 
		These theorems also require several assumptions which hold for our particular case. First, in our particular case where Brenier's Theorem holds, the derivatives of the dual potentials of the unpenalized problem equal the Monge map. These maps are then infinitely differentiable since in our case they are equal to the identity function. This also means $t =3$. Also since we assume $I_0$ exists, we do not need the bounded densities required in \citet{pooladian_entropic_2021}.
		Then using the value of $\lambda$ from \eqref{eq:lambda_n_val}, we get
		\[E \lesssim (1 + I_0) n^{- \frac{2}{2 d' + 9 }}.\]
		
		Finally, to establish the rate is determined by $E$, we need for the term in $F$ to go to zero faster than the term in $E$. To see this, we have that
		\[\frac{2LF}{E} = \frac{2L\sqrt{E}\sqrt{G}}{E} = \frac{2L\sqrt{G}}{\sqrt{E}}.\] 
		Since the numerator goes to zero faster than the denominator, the overall rate is determined by $E$.

	\end{proof}
	
	\subsection{Proof of Proposition \ref{prop:l2_map}}
	\begin{proof}
		It will be sufficient to show the proof for the estimator of $\E Y(z)$ for one such group $Z=z$. The barycentric projection estimator of the mean in this group will be $n \inv \sum_i \hat{Y}_i(z)$. Expanding the term,
		\begin{align*}
			n \inv \sum_i \hat{Y}_i(z) &= n\inv \sum_i n \sum_j Y_j \indicator(Z_j = z) \boldP_{ji}\\
			&= \sum_j Y_j \indicator(Z_j = z) \sum_i \boldP_{ji}\\
			&= \sum_j Y_j \indicator(Z_j = z) w_j,
		\end{align*}
		where the first equality is by definition and the last equality is by the constraints of the optimal transport problem.
	\end{proof}
	
	\subsection{Proof of Proposition \ref{prop:scm}} \label{sec:scm_proof}
	\begin{proof}
		
		Under the assumptions of the proposition, the limiting optimal transport plan is unique and supported on the graph of a Monge map, the Monge map will be equal to the barycentric projection, and the empirical transport plans converge to the limiting value \citep{brenier_decomposition_1987, Ambrosio2005, villani_cedric_optimal_2008}.
		
		Let $\hat{T}(X_j) = \sum_{i: Z_i = z} X_i w_i$. We also take our data to be univariate for notational simplicity but the results are easily extended to data in $\R^d$. Then the objective for SCM can be written as
		\begin{align*}
			\operatorname{SCM }& = \inf_{\hat{T}} n \inv \sum_j \left \|\hat{T}(X_j) - X_j \right \|_2^2 \\
			&=  n\inv \sum_j \hat{T}(X_j)^2 - 2 \hat{T}(X_j) X_j + X_j^2.
		\end{align*}
		We can represent $\hat{T}(X_j)$  as a matrix of weights with certain constraints instead of the sum over the weights for each $j$:
		\begin{align} 
			\operatorname{SCM} & = \inf_{\tilde{\boldP} \in \R^{n_z \times n}_+: \tilde{\boldP} \trans \ones = \ones}
			n\inv \sum_j \left(\sum_{i: Z_i = z} X_i \tilde{\boldP}_{ij}\right)^2 - 2 X_j \sum_{i: Z_i = z} X_i \tilde{\boldP}_{ij}  + X_j^2 \nonumber \\
			&= \inf_{\tilde{\boldP} \in \R^{n_z \times n}_+: \tilde{\boldP} \trans \ones = \ones}
			n\inv \sum_j \sum_{i: Z_i = z} X_i \tilde{\boldP}_{ij} \tilde{X}_j - 2 \sum_{i: Z_i = z}X_j  X_i \tilde{\boldP}_{ij}   + X_j^2 \nonumber \\
			&= \inf_{\boldP \in \R^{n_z \times n}_+: \boldP \trans \ones = \bolda}  \sum_{i: Z_i = z, j} X_i   \tilde{X}_j \boldP_{ij} - 2  \sum_{i: Z_i = z,j }X_j  X_i\boldP_{ij}   +  \sum_{i: Z_i = z,j } X_j^2\boldP_{ij} \label{eq:scm_final}
		\end{align}
		The terms in $ \tilde{X}_j$ are equivalent to a barycentric projection from $\bolda$ into $\bolda_z$. And we can see that the solution $\boldP_{ij}$ is equivalent to an optimal transport plan with one of the margins allowed to vary. The corresponding problem under a Kantorovich relaxation is
		\begin{align}
			\inf_{\boldP \in \R^{n_z \times n}_+: \boldP \trans \ones = \bolda}  \sum_{i: Z_i = z, j} X_i ^2 \boldP_{ij} - 2  \sum_{i: Z_i = z,j }X_j  X_i\boldP_{ij}   +  \sum_{i: Z_i = z,j } X_j^2\boldP_{ij}.
			\label{eq:kant_final}
		\end{align}
		The marginal distributions $\boldw$ from both of these problems can be found as $\boldw = \boldP \ones$.
		Eqs. \eqref{eq:scm_final} and \eqref{eq:kant_final} will be equal in finite samples if $\sum_j \tilde{X}_j \boldP_{ij} = X_i \sum_{j} \boldP_{ij}$, which is not guaranteed. However, in the limit they will be the same, as we show next.
		
		Under a similar argument to the proof of Theorem \ref{thm:conv_cot},
		$\ot{\boldw_{NNM}}{\bolda}{} \leq \ot{\boldw^\star}{\bolda}{} \to 0$. Then, denoting the Monge problem using barycentric projections as $\operatorname{M}$,
		$\operatorname{M}(\boldw_{SCM}, \bolda)  \leq \operatorname{M}(\boldw_{NNM}, \bolda) \to 0$. The last term holds since $
		\boldw_{NNM} \to \alpha$ and $\int x d\pi(x \given x') = x'$.
		
		Finally, to see the equivalence of the Monge and Kantorovich formulations for fixed margins $\alpha ,\; \omega$ under the given assumptions:
		\begin{align*}
			\inf_T \int | T(x') - x' |^2 d \alpha &= \inf_T  \int T(x')^2 d\alpha + \int (x')^2 d\alpha - 2 \int x' T(x') d\alpha \\
			&= \inf_T  \int x^2 dT_\# \alpha + \int (x')^2 d\alpha - 2 \int x T(x) d \alpha \\
			&= \int x^2 d\omega + \int (x')^2 d\alpha  - \inf_\pi \int x' \left[ \int x d\pi(x \given x')\right] d \alpha\\
			&= \int x^2 d\omega + \int (x')^2 d\alpha  - \inf_\pi \int  x x'  d\pi(x , x')\\
			&= \inf_{\pi \in \boldU (\omega, \alpha)} \int |x - x'|^2 d \pi(x, x')
		\end{align*}
	\end{proof}
	
	\section{Further empirical studies} \label{sec:emp_studies}

	\subsection{Empirical convergence} \label{sec:emp_conv}
	
	The data 
	generating model in this section comes from the setting in Section \ref{sec:sims} with high-overlap between covariate distributions and the estimand of interest is the ATE. 
	We examine three measures of performance of the estimated weights, $\boldw$, at approximating 1) the target empirical distribution, $\bolda$, in terms of 2-Sinkhorn divergence, 2) the distribution of the self-normalized propensity score, $\boldw^\star$, in terms of the 2-Sinkhorn divergence, and 3) the difference between $\boldw$ and $\boldw^\star$ under an $L_2$ norm. We present averages across 1000 replications.
	
	The comparator methods we use are: a Probit Generalized Linear Model (GLM), the true data-generating model; Stable Balancing Weights (SBW) using the correct propensity score covariate functions; and Nearest Neighbor Matching (NNM) with replacement, which is of course equivalent to an unpenalized COT (see Section \ref{sec:compare}). For NNM, we use a cost function that is equal to $\| \cdot \|_p^p$ with $p = d/2+1$ to meet the conditions of Theorem 1 in \citet{Fournier2015}. 
	
	In Figure \ref{fig:wass_conv_B}, we see that the COT weights do a better job of approximating the target distribution under the 2-Sinkhorn Divergence but that the GLM model does better at targeting the distribution implied by the true inverse propensity score under the same metric. Of note, SBW displays decaying rates convergence as the sample size increases and even performs worse than NNM for large sample sizes.
	
	Figure \ref{fig:l2_conv_B} displays the convergence in $L_2$-norm for the various methods. As we would expect, the GLM model converges fastest to the values of the true inverse propensity score. The COT weights using the Sinkhorn divergence display slightly worse rates of convergence, on average, followed by NNM. SBW again displays a rate that decays with the sample size, though it does perform better than other methods when sample sizes are small.
	\begin{figure}[!tb]
		\includegraphics[width=\textwidth]{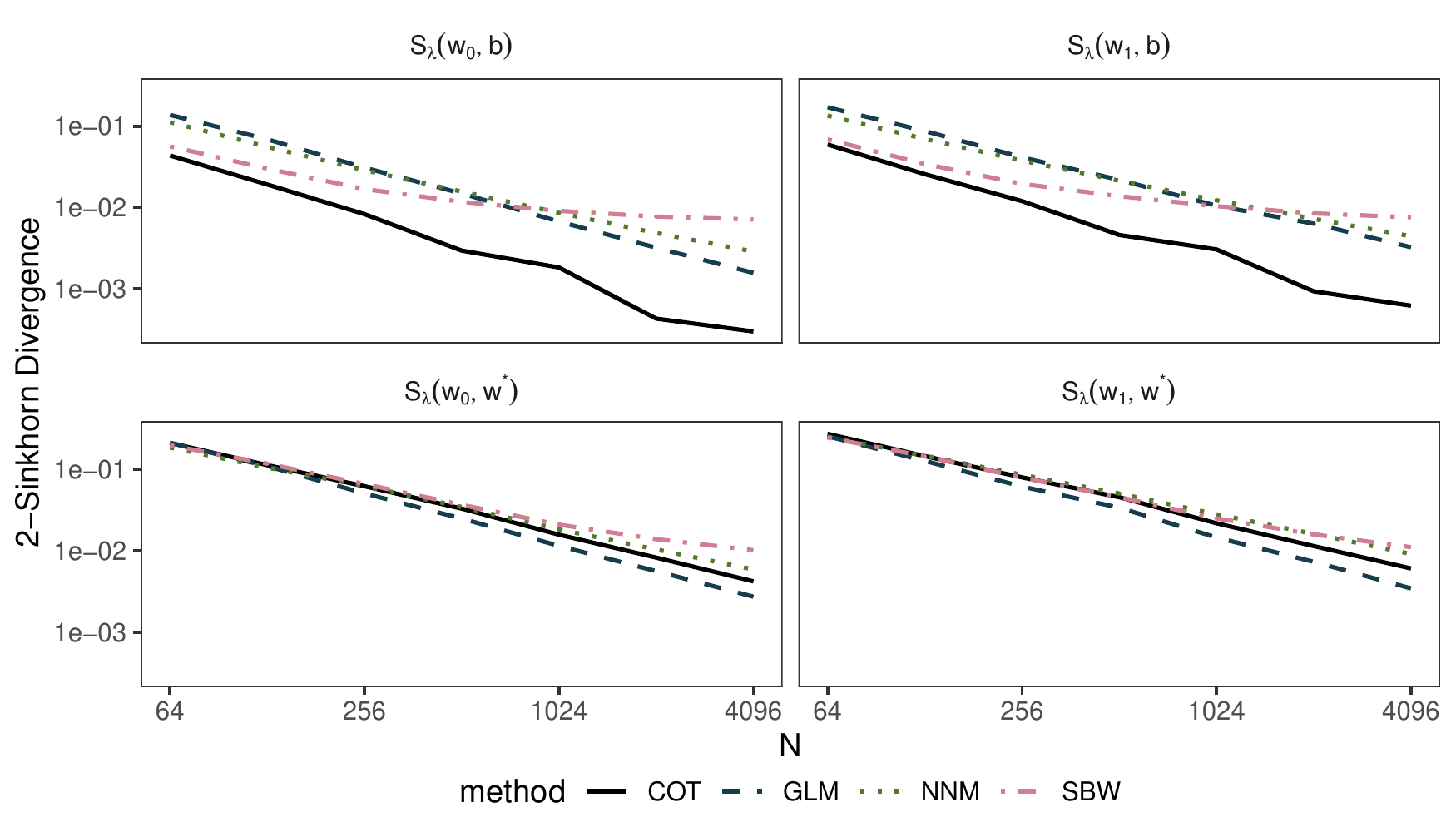}
		\caption{Convergence of the weights to the distributions specified by the empirical distributions (top) and the distributions specified by the true propensity score/Radon-Nikodym derivatives (bottom). Weights are a Causal Optimal Transport (COT), Nearest Neighbor Matching (NNM), a Probit model (GLM), and Stable Balancing Weights (SBW). Lines denote means across 1000 simulations. Both axes are on the log scale.}
		\label{fig:wass_conv_B}
	\end{figure}
	\begin{figure}[!tb]
		\includegraphics[width=\textwidth]{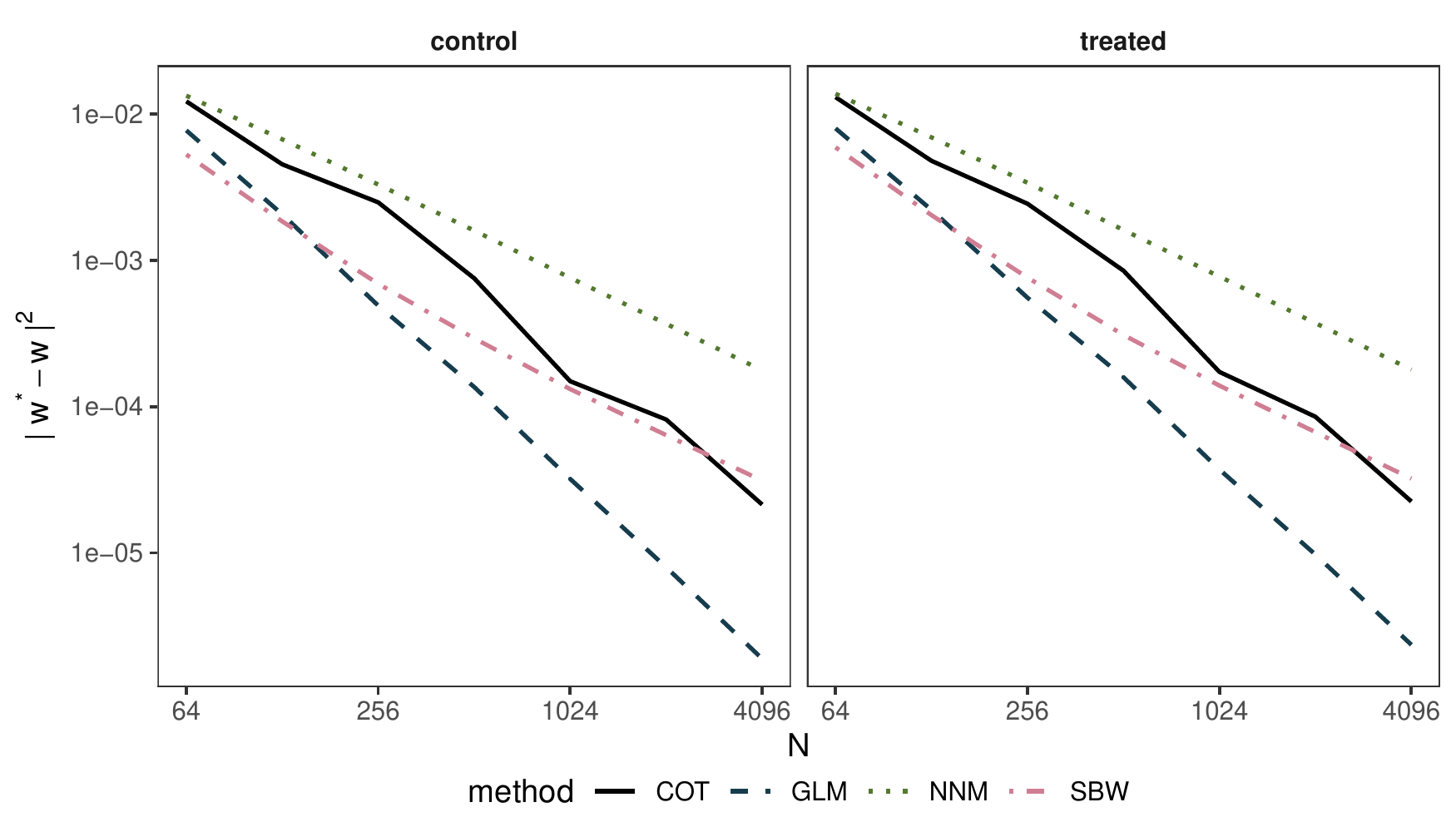}
		\caption{Convergence of the estimated weights to the values of the true inverse propensity score in terms of the $L_2$ norm. Weights are a Causal Optimal Transport (COT), Nearest Neighbor Matching (NNM), a Probit model (GLM), and Stable Balancing Weights (SBW). Lines denote means across 1000 simulations. Both axes are on the log scale.}
		\label{fig:l2_conv_B}
	\end{figure}

	\subsection{Empirical coverage of asymptotic confidence interval} \label{sec:vopt_cov}
	The empirical coverage of the confidence interval is the focus of this subsection. We utilize the generating model from the setting in Section \ref{sec:sims} with high-overlap between covariate distributions and a linear outcome model linear outcome model 
	\[Y(0) = Y(1) = X_1 + X_2 + X_3 - X_4 + X_5 + X_6 + \eta \]
	with $\eta \sim \N(0,1)$. 
	The target distribution, $\bolda$, is the full sample, making the estimand of interest the ATE. We run 1000 replications of our experiment. 
	\begin{figure}[!tb]
		\centering
		\begin{subfigure}[t]{\linewidth}
			\includegraphics[width=\linewidth]{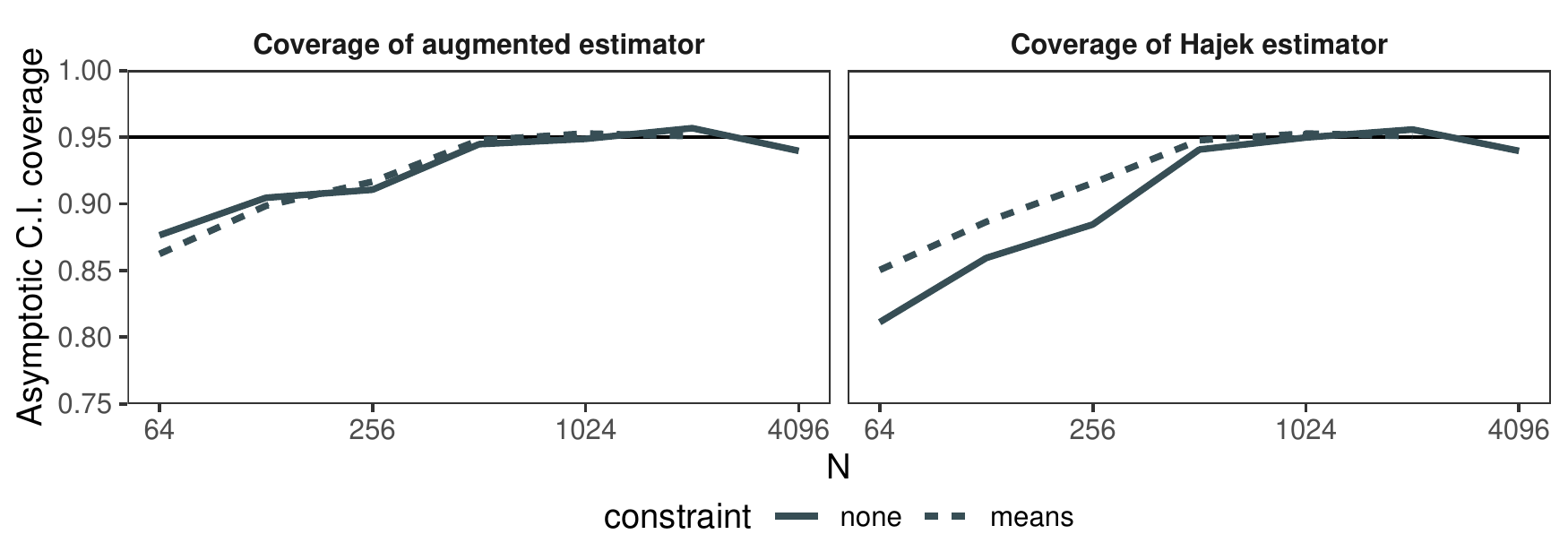}
			\caption{Coverage of the true treatment effect}
			\label{fig:ot_ci_cov_true} 
		\end{subfigure}
		\begin{subfigure}[b]{\linewidth}
			\includegraphics[width=\linewidth]{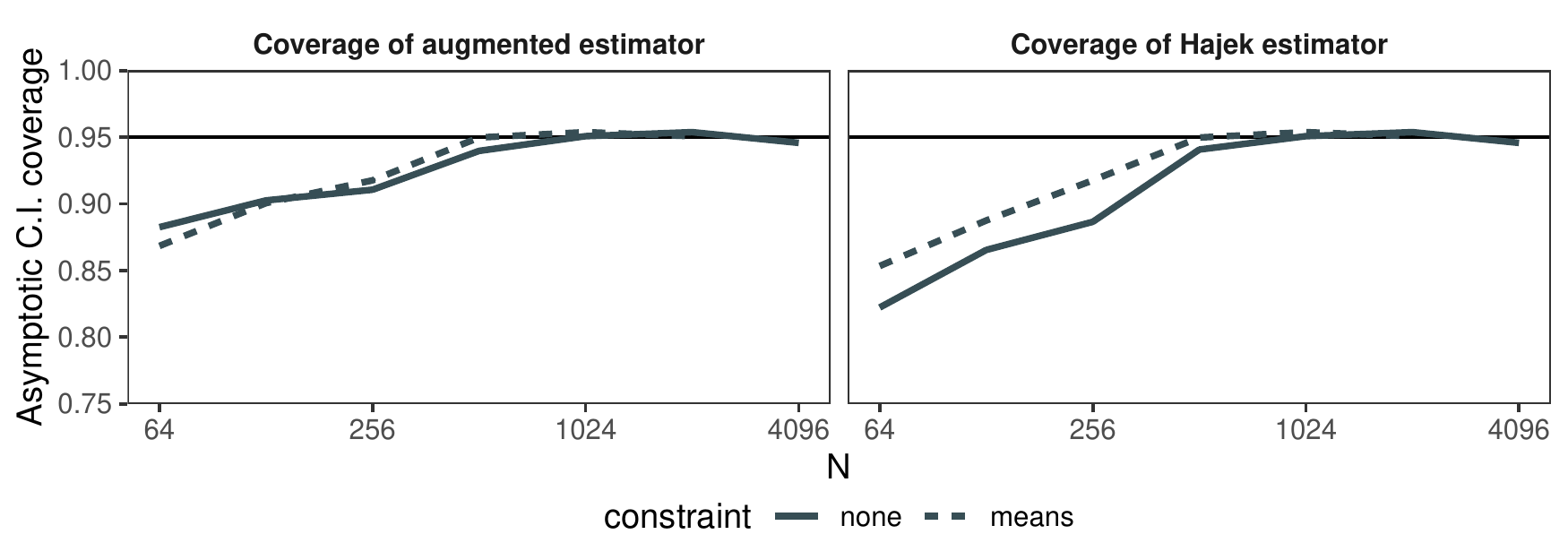}
			\caption{Coverage of the estimated average treatment effect}
			\label{fig:ot_ci_cov_expect}
		\end{subfigure}
		\caption{Coverage of the asymptotic confidence interval for both the true effect (a) and the average estimated effect (b). Solid lines denote no basis function balancing (``none'') and the balancing of the means of covariates (``means'').}
		\label{fig:ot_ci_cov}
	\end{figure}
	
	Figure \ref{fig:ot_ci_cov} displays results for increasing sample sizes. In the top part, Figure \ref{fig:ot_ci_cov_true}, we examine the coverage of the true estimate of zero in a variety of settings. Amazingly, the COT method achieves well-calibrated confidence intervals without using an augmented estimator or mean constraints. Similar results are observed for the empirical expectations in Figure \ref{fig:ot_ci_cov_expect}. In both cases, the non-augmented balancing constraint method converges a bit faster than the non-augmented method without balancing constraints.
	
	\subsection{Tuning algorithm} \label{sec:tun_alg}
	In this section, we examine the performance of the tuning algorithm presented in Algorithm \ref{alg:bootwass}. We again use the setting of Section \ref{sec:sims} with high-overlap between covariate distributions, use COT with an $L_2$ metric and no balancing functions, and for a variety sample sizes from 32 to 4096. The target distribution in this case is the full sample making the estimand the ATE. Performance is measured in terms of an Anderson-Darling statistic between the estimated weights, $\boldw$, and the self-normalized inverse propensity score, $\boldw^\star$:
	\[\frac{1}{n}\sum_i^n \frac{(w_i - w_i^\star)^2}{w_i^\star(1 - w_i^\star)}.\]
	We use this term rather than a simple $L_2$ norm because it will appropriately adjust for the discrepancy between weight vectors as the values become small. Finally, we run this experiment 1000 times.
	
	Figure \ref{fig:tune} displays the results for the tuning algorithm as the sample size increases. We can see that initially the algorithm avoids the highest discrepancy area for intermediate values of the penalty parameter $\lambda$. As the sample size increases, the algorithm concentrates on $\lambda$ values that minimize the difference between the true and estimated propensity scores. This holds true for both the treated and control observations.

	\begin{figure}[!tb]
		\centering
		\begin{subfigure}[t]{\linewidth}
			\includegraphics[width=\linewidth]{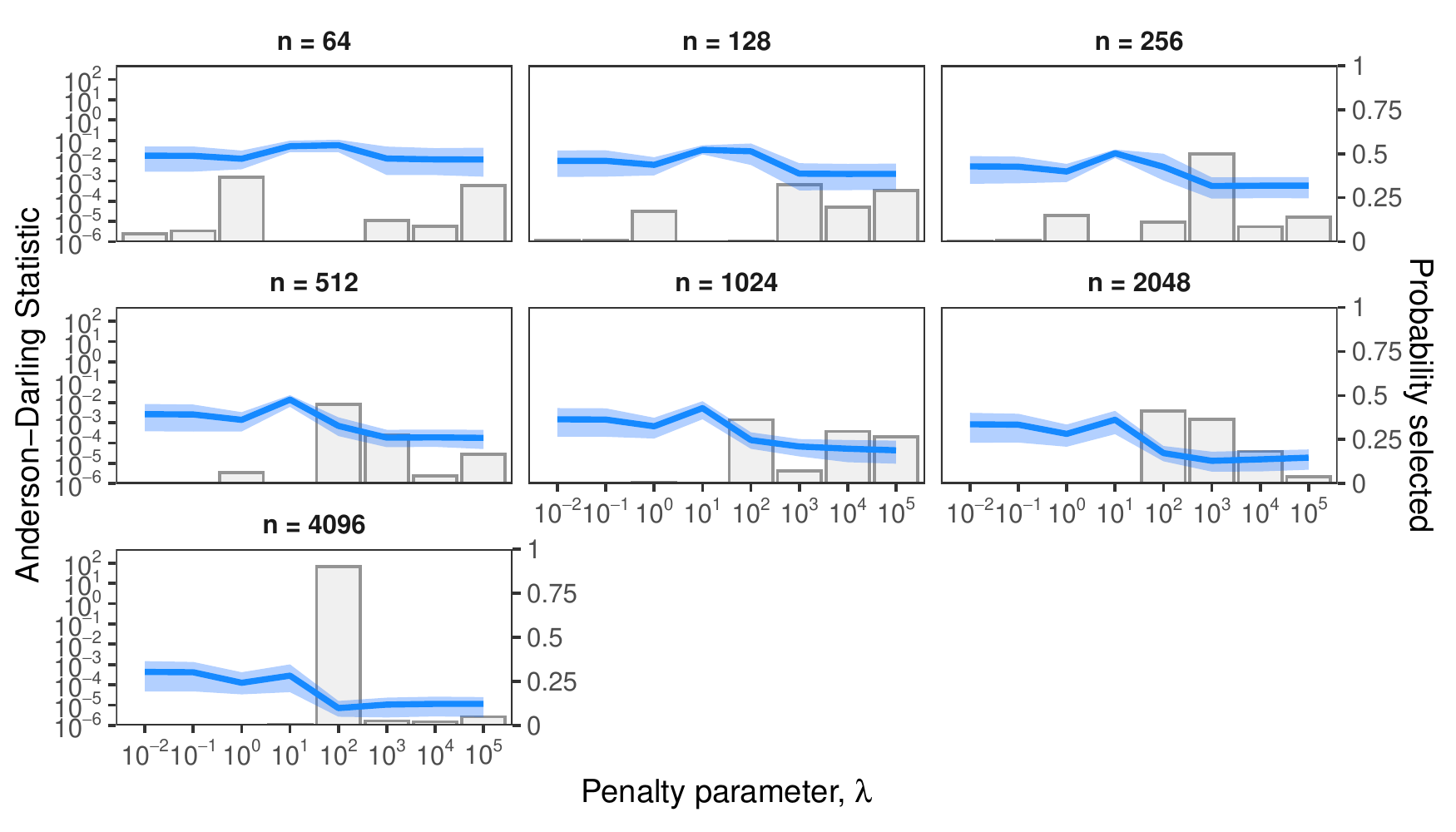}
			\caption{Selection of penalty parameter, $\lambda$, for the treated}
		\end{subfigure}
		\begin{subfigure}[b]{\linewidth}
			\includegraphics[width=\linewidth]{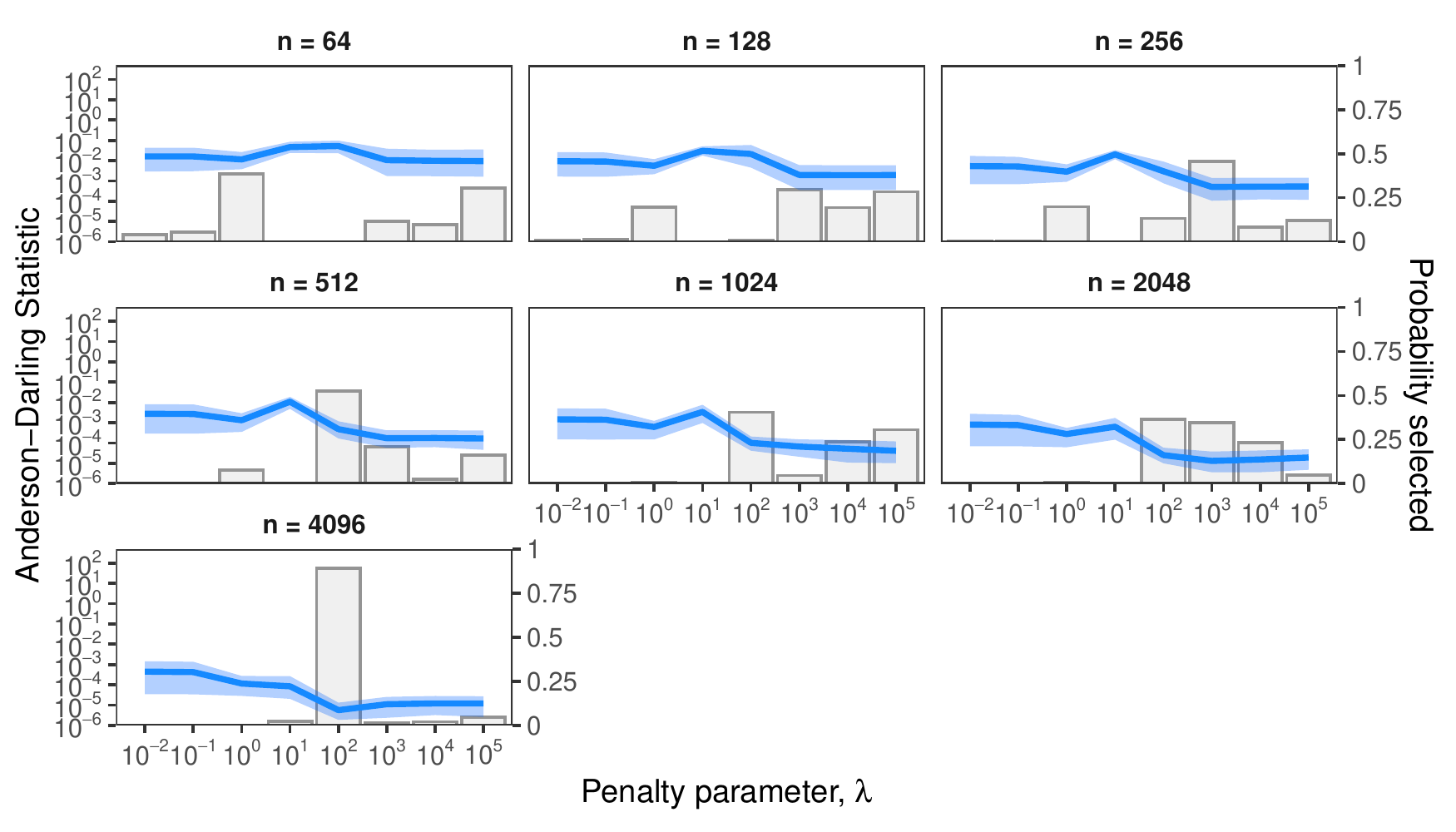}
			\caption{Selection of the penalty parameter, $\lambda$, for the control}
		\end{subfigure}
		\caption{Performance of the tuning algorithm in Algorithm \ref{alg:bootwass}. Discrepancy between the weights estimated by COT and the true self-normalized inverse propensity score in terms of an Anderson Darling statistic in blue. The probability that a $\lambda$ value was selected is given by the histogram.}
		\label{fig:tune}
	\end{figure}

	\section{Additional case study: the  LaLonde Data}\label{sec:lalonde}
	We also validate our method on the LaLonde data set \citep{LaLonde1986}. 
	
	\subsection{The National Supported Work Demonstration program}
	The original data come from a job training program called the National Supported Work Demonstration program (NSW) in which people were randomized to receive or not receive training from the program in the year 1976. The outcome of interest was then to look at the difference in incomes between the treatment and control groups in 1978. The original experimental estimate was a difference of \$1,794 with a confidence interval of $(\$ 551, \$3038 )$. The variables available in the original study include 10 pre-intervention characteristics: earnings and employment in 1974 and 1975, years of education, whether the person received a high school degree, marital status, and indicators for black or Hispanic ethnicity.
	
	\subsection{LaLonde's modification}
	LaLonde then proceeded to modify the original study data by removing the control group and seeing if he could recover the original treatment effect by utilizing an observational data sample taken from the Current Population Survey (CPS) with the same variables measured. This gives 185 participants from the NSM in the treated group and 15,992 non-participants from the CPS in the control group.
	
	\subsection{Methods}
	
	From the Causal Optimal Transport weighting methods, we include no constraints (``none'') and mean constraints (``means''). Hyperparameters were tuned with the algorithm detailed in Algorithm \ref{alg:bootwass}. The distance metric is an $L_2$ metric on the binary covariates and a  Mahalanobis $L_2$ metric on the continuous covariates. We consider the H\'ayek estimator in  \eqref{eq:hajek_init}, a doubly robust/augmented estimator using linear regression with linear terms of the covariates, a weighted least squares estimator, and the barycentric projection estimator of Eq. \eqref{eq:map} utilizing an assignment matrix $\boldP$ constructed utilizing an $L_1$ cost.

	\subsection{Design diagnostics}
	We now display the before and after weighting balance in variable means and 2-Sinkhorn divergence to give a sense of distributional balance. We can see that for all weighting methods both means and distributions are much more similar after weighting than before (Figure \ref{fig:lalonde_diag}).
	
	\begin{figure}[!ht]
		\centering
		\begin{subfigure}[t]{0.6\textwidth}
			\includegraphics[width =\textwidth]{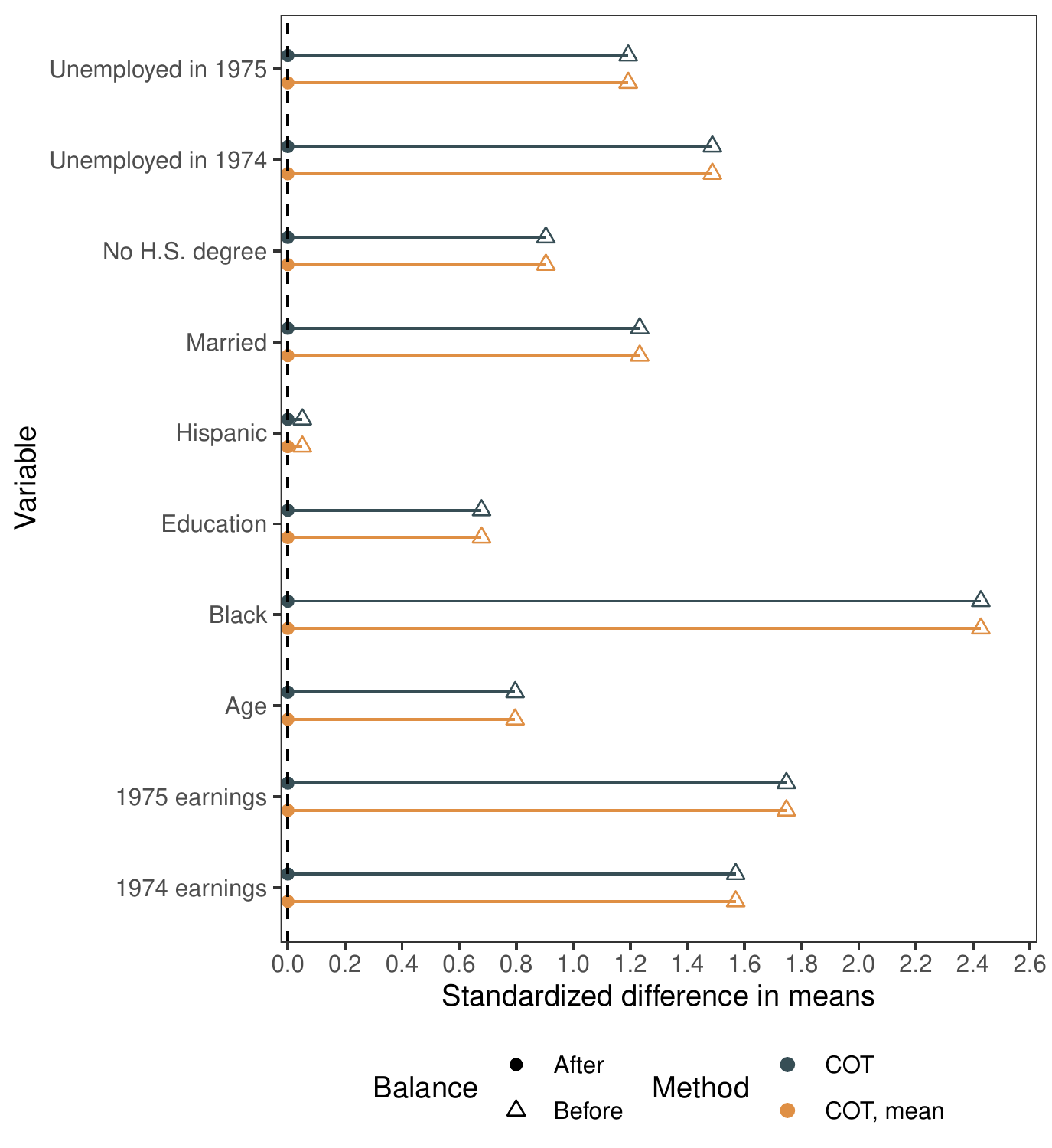}
			\caption{Change in the standardized difference in means between the two groups before and after weighting}
		\end{subfigure}
		\begin{subfigure}[b]{0.7\textwidth}
			\includegraphics[width =\textwidth]{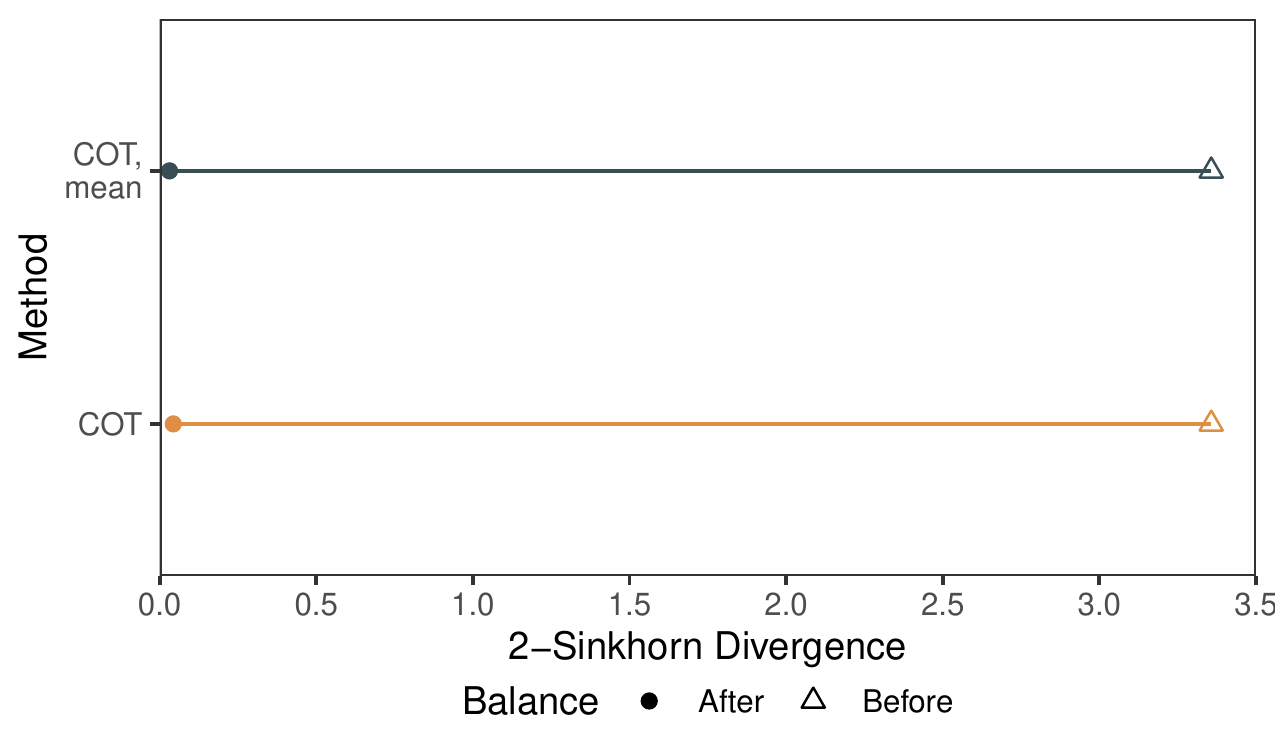}
			\caption{Change in the 2-Sinkhorn divergence between the two groups before and after weighting}
		\end{subfigure}
		\caption{An examination in the change in balance before and after utilizing the optimal transport methods with the listed constraints for the LaLonde data. ``COT'' corresponds to no constraints and ``COT, mean'' corresponds to constraints on the mean balance between distributions.}
		\label{fig:lalonde_diag}
	\end{figure}

	\subsection{Results}
	In Table \ref{tab:lalonde} and Figure \ref{fig:lalonde}, we see that we are able to get very close to the original effects for the H\'ajek, Augmented, and weighted least squares approaches. The barycentric projection estimators have a notable upward bias but still have confidence intervals covering the true effect.

	\begin{figure}[!htb]
		\centering
		\includegraphics[width =0.7\textwidth]{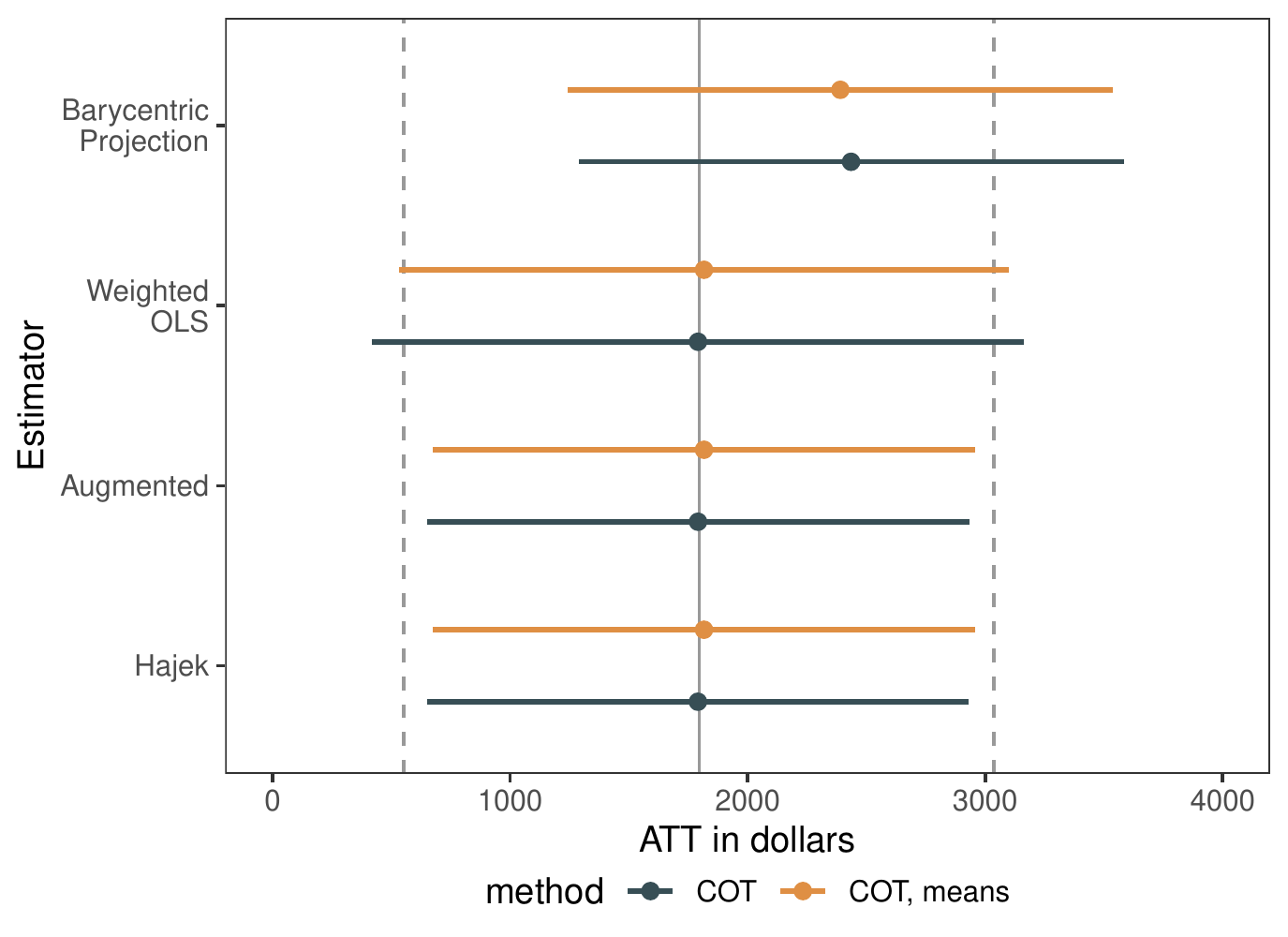}
		\caption{Results for treatment effect estimation for the National Work Support demonstration treated group and the weighted set of controls from the Current Population Survey. The estimate is the difference in 1978 earnings in dollars between the two groups targeting the average treatment effect of the treated (ATT). We see that all optimal transport methods and estimators displayed are able to get close to the original treatment effect. Note that ``COT'' corresponds to no constraints and ``COT, mean'' corresponds to constraints on the mean balance between distributions.}
		
		\label{fig:lalonde}
	\end{figure}

\begin{table}[ht]
\centering
\begin{tabular}{lllll}
  \hline
 & Hajek & Augmented & Weighted OLS & Barycentric Projection \\ 
  \hline
COT & 1791 (649, 2932) & 1791 (650, 2932) & 1791 (418, 3164) & 2435 (1287, 3583) \\ 
  COT, means & 1816 (675, 2957) & 1816 (675, 2957) & 1816 (532, 3100) & 2390 (1243, 3538) \\ 
   \hline
\end{tabular}
\caption{Results for treatment effect estimation for the National Work Support demonstration treated group and the weighted set of controls from the Current Population Survey. The estimate is the difference in 1978 earnings in dollars between the two groups. Values are estimates with asymptotic 95\% confidence intervals.} 
\label{tab:lalonde}
\end{table}

	\section{Other versions of Causal Optimal Transport} \label{sec:other_ot}
	
	We can also represent COT using Eq. \eqref{eq:ot_reg}:
	\begin{equation}
		\cotobj{\bolda}{\lambda} = \min_{\boldw \in \Delta_n} \; \ot{\boldw_1}{\bolda}{\lambda} + \ot{\boldw_0}{\bolda}{\lambda}.
		\label{eq:COT_pen}
	\end{equation}
	In these equations, we can either use an entropy or an $L_2$ penalty and can also incorporate balancing constraints.
	We can, of course, show that Theorems \ref{thm:conv_cot} and \ref{thm:vopt} hold. 
	
	\subsection{Other versions of COT also converge}
	Starting with the proof of convergence, we need the following additional assumption:
	\begin{assumption}
		\label{assum:conv_append}
		For $\cotobj{\bolda}{\lambda}$ in Eq \eqref{eq:COT_pen} with an entropy penalty: $\lambda \to 0$ as $n \to \infty$
	\end{assumption} 
	
	\begin{proof}
		We begin by proving the $L_2$ regularized weights converge, then the entropically regularized weights, and finally, the Sinkhorn divergence. We also have that under Assumption \ref{assum:si}, $\boldw^\star$ exists. Then by Lemma \ref{lemm:is_conv}, $\boldw^\star \rightharpoonup \alpha$.
		
		\textbf{$\mathbf{L_2}$ penalization.}
		Theorem 1 of \citet{Blondel2018} give bounds on $\ot{\bolda_z}{\bolda}{\lambda}$:
		\begin{equation}
			\frac{\lambda}{2} \sum_{i: Z_i = z,j} \left(\frac{\bolda_{z,i}}{n} + \frac{\bolda_j}{n} - \frac{1}{n^2} \right)^2 \leq \ot{\boldw^\star}{\bolda}{\lambda} - \ot{\bolda_z}{\bolda}{} \leq \frac{\lambda}{2} \min \{ \| \bolda_z \|^2, \|\bolda \|^2 \}. 
			\label{eq:l2_bounds}. 
		\end{equation}
		
		\noindent Then the upper bounds on the $L_2$ regularized problem for the importance sampling weights are
		\[ \ot{\bolda_z}{\bolda}{\lambda}  \leq \ot{\boldw^\star}{\bolda}{} + \frac{\lambda}{2} \|\bolda \|, \]
		where the first inequality follows from rearrangement of Eq. \eqref{eq:l2_bounds} and the fact that $\min \{ \| \boldw^\star \|^2, \|\bolda \|^2 \}$ is minimized by the measure where all the observations have the same weight.
		Also, 
		\[\|\bolda \|^2 = \sum_j \bolda_j^2 = \frac{1}{n} \to 0. \]
		Thus, $\ot{\boldw^\star}{\bolda}{\lambda} \to \ot{\boldw^\star}{\bolda}{}$ and by Corollary 6.9 of \citet{villani_cedric_optimal_2008}, $\ot{\boldw^\star}{\bolda}{} \to 0$.
		
		Now we turn directly to the Causal Optimal Transport weights. The problem is convex \citep{Blondel2018}, which means that 
		\[ \ot{\boldw_{\text{COT}}}{\bolda}{\lambda} \leq \ot{\boldc}{\bolda}{\lambda} \] for all $\boldc \in \Delta_n$ that satisfy the constraints of the problem. 
		Further, by assumption $\exists n > 0$ such that the importance sampling weights $\boldw^\star$ also satisfy the balancing constraints. This means that 
		\[ \ot{\boldw_{\text{COT}}}{\bolda}{\lambda} \leq \ot{\boldw^\star}{\bolda}{\lambda} \]
		and both quantities also satisfy the problem constraints for some $n$.
		
		Finally, 
		if $\ot{\boldw_{\text{COT}}}{\bolda}{\lambda}$ goes to 0, this will mean $\boldw_{\text{COT}} \rightharpoonup \alpha$ since $\ot{\boldw_{\text{COT}}}{\bolda}{\lambda} \to \ot{\boldw_{\text{COT}}}{\bolda}{}$.
		Thus, since $\ot{\boldw_{\text{COT}}}{\bolda}{\lambda} \to 0$ because $\ot{\boldw^\star}{\bolda}{\lambda}$ goes to 0, by 
		Corollary 6.9 in \citet{villani_cedric_optimal_2008}
		\[\boldw_{\text{COT}} \rightharpoonup \alpha.\]
		
		\textbf{Entropy penalization.}
		The entropy penalized Causal Optimal Transport problem is also a convex problem, which allows us to conclude
		\[\ot{\boldw_{\text{COT}}}{\bolda}{\lambda} \leq \ot{\boldc}{\bolda}{\lambda}\]
		for $\forall \boldc \in \Delta_n$ since $\boldw_{\text{COT}}$ minimizes this loss.
		This gives us the bound
		\[
		0 \leq \ot{\boldw_{\text{COT}}}{\bolda}{\lambda} \leq \ot{\boldw^\star}{\bolda}{\lambda}.
		\]
		Since the entropy penalized optimal transport problem does not metrize weak convergence, we require that $\lambda \to 0$.
		
		As $\lambda \to 0$ (by assumption) and $n \to \infty$,
		\[\ot{\boldw^\star}{\bolda}{\lambda} \to 0\]
		since $\boldw^\star \rightharpoonup \alpha$. This implies that
		\[\ot{\boldw_{\text{COT}}}{\bolda}{\lambda} \to 0,\] which implies that $\boldw_{\text{COT}} \rightharpoonup \alpha$ by 
		Corollary 6.9 in \citet{villani_cedric_optimal_2008} since at $\lambda = 0$, $\text{OT}_\lambda = \text{OT}$.

	\end{proof}
	
	\subsection{Convergence happens at a $\sqrt{n}$-rate}
	Then semiparametric efficiency also holds with the following additional assumption:
	\begin{assumption}
		For $L_2$ penalized weights, $c(\cdot, \cdot) = d_\mcX(\cdot, \cdot)^p$, with $p > d/2$ and $\E|X|^q < \infty$ for $q > 2p$.
		\label{assum:conv_rootn_l2}
	\end{assumption}
	\begin{proof}
		First, $\boldw^\star$ exist under Assumption \ref{assum:si} and $\boldw^\star \rightharpoonup \alpha$ by Lemma \ref{lemm:is_conv}. Also, under Assumption \ref{assum:conv}, Theorem \ref{thm:conv_cot} holds and $\boldw_{\text{COT}} \rightharpoonup \alpha$.
		
		\textbf{$\mathbf{L_2}$ regularization.} Theorem 1 of \citet{Blondel2018} give bounds on $\ot{\bolda_z}{\bolda}{\lambda}$:
		\[
		\frac{\lambda}{2} \sum_{i: Z_i = z,j} \left(\frac{\bolda_{z,i}}{n} + \frac{\bolda_j}{n} - \frac{1}{n^2} \right)^2 \leq \ot{\bolda_z}{\bolda}{\lambda} - \ot{\bolda_z}{\bolda}{} \leq \frac{\lambda}{2} \min \{ \| \bolda_z \|^2, \|\bolda \|^2 \}. 
		\]
		
		\noindent This implies that regularized problem converges at a linear rate to the unregularized problem because $\min \{ \| \bolda_z \|^2, \|\bolda \|^2 \} = \|\bolda \|^2$ because under an iid assumption $a_j = 1/n, \forall j$. Therefore, 
		\[
		\ot{\bolda_z}{\bolda}{\lambda} \leq \ot{\bolda_z}{\bolda}{} + \frac{\lambda}{2} \|\bolda \|^2.
		\]
		Then 
		\[
		\|\bolda\|_2^2 = \sum_j \bolda_j^2 = \frac{1}{n} \to 0.
		\]
		This also implies that
		\[\lim_{n \to \infty} \ot{\boldw_{\text{COT}}}{\bolda}{\lambda} = \ot{\alpha}{\alpha}{\lambda} = \ot{\alpha}{\alpha}{}\]
		since $\boldw_{\text{COT}} \rightharpoonup \alpha$.
		
		Further, using Theorem 1 in \citet{Fournier2015} we have that under Assumption \ref{assum:conv_rootn_l2},
		\[\E \{ \ot{\boldw_{\text{COT}}}{\bolda}{}\} \leq \E \{ \ot{\boldw_{\text{COT}}}{\alpha}{} + \ot{\bolda}{\alpha}{}\} = \mcO\left(\frac{1}{\sqrt{n}} \right).\]
		Then,
		\[\E \left\{ \ot{\boldw_{\text{COT}}}{\bolda}{} -  \ot{\alpha}{\alpha}{\lambda} \right\} = \E  \left\{ \ot{\boldw_{\text{COT}}}{\bolda}{}\right\} = \mcO\left(\frac{1}{\sqrt{n}} \right).\]
		
		\textbf{Entropy regularization.}
		First, by convexity
		\[0 \leq \ot{\boldw_{\text{COT}}}{\bolda}{\lambda} \leq \ot{\boldw^\star}{\bolda}{\lambda}\]
		and 
		\[0 \leq \ot{\bolda}{\bolda}{\lambda} \leq \ot{\boldw_{\text{COT}}}{\bolda}{\lambda}.\]
		
		Also,
		\[\ot{\bolda}{\bolda}{\lambda} - \ot{\alpha}{\alpha}{\lambda}  \leq \ot{\boldw_{\text{COT}}}{\bolda}{\lambda} - \ot{\alpha}{\alpha}{\lambda} \leq  \ot{\boldw^\star}{\bolda}{\lambda} - \ot{\alpha}{\alpha}{\lambda}.\]
		Then with Assumptions \ref{assum:si}--\ref{assum:conv_rootn}, the conditions of either Theorem 3 of \citet{Genevay2018} or Corollary 1 of \citet{Mena2019} hold. This means that
		\[ \E   \left\{ \ot{\boldw^\star}{\bolda}{\lambda} -  \ot{\alpha}{\alpha}{\lambda} \right \} = \mcO\left(\frac{1}{\sqrt{n}}\right) \]
		and 
		\[ \E  \left\{\ot{\bolda}{\bolda}{\lambda} -  \ot{\alpha}{\alpha}{\lambda}\right\} = \mcO\left(\frac{1}{\sqrt{n}}\right). \]
		Thus, 
		\[ \E  \left\{ \ot{\boldw_{\text{COT}}}{\bolda}{\lambda} - \ot{\alpha}{\alpha}{\lambda}\right\} = \mcO\left(\frac{1}{\sqrt{n}}\right) .\]
		
	\end{proof}
	
	\subsection{Dual formulation}
	Finally, we also have the following dual form for this problem:
	\begin{theorem}
		The dual of each term in Eq. \eqref{eq:COT_pen} is 
		\begin{equation}
			\max_{g,\xi}  \; g \trans \bolda - \sum_k \delta_k  |\xi_k| - \frac{1}{m}\sum_j \xi \trans B(X_j) - \sum_{i,j'} \indicator(Z_i = z) H^\ast_\lambda \left( g_{j'}  - \xi \trans B(X_{j'})  - \boldC_{ij'} \right),
			\label{eq:COT_pen_dual}
		\end{equation}
		where $B(X) = \begin{pmatrix}
			B_1(X) & ... & B_K(X)
		\end{pmatrix}\trans$ and $H_\lambda^\ast$ is the convex conjugate of the penalty function $H_\lambda$.
		\label{thm:dual}
	\end{theorem}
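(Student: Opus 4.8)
The plan is to derive \eqref{eq:COT_pen_dual} as the Lagrangian dual of a single summand $\ot{\boldw_z}{\bolda}{\lambda}$ of \eqref{eq:COT_pen}, regarded as a joint minimization over a coupling $\boldP$ and over the weights $\boldw_z$ that constitute its first marginal. First I would write the primal explicitly: minimize $\sum_{i,j}\indicator(Z_i=z)\{\boldC_{ij}\boldP_{ij}+H_\lambda(\boldP_{ij})\}$ over $\boldP\geq 0$, subject to the target-marginal equalities $\sum_{i}\indicator(Z_i=z)\boldP_{ij}=a_j$ for every $j$ and to the balancing constraints \eqref{eq:cot_bf} imposed on the induced weights $w_i=\sum_j\boldP_{ij}$. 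The key structural observation is that the source marginal $\boldw_z$ is \emph{free}: it is never pinned to a fixed measure, only shaped through \eqref{eq:cot_bf}, so it contributes no Kantorovich potential of its own, and the requirement $\boldw_z\in\Delta_n$ holds automatically because $\boldP\geq 0$ forces $w_i\geq 0$ while $\sum_i w_i=\sum_j a_j=1$.

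Next I would attach dual variables. A free multiplier $g_j$ is introduced for each target-marginal equality, producing the term $g\trans\bolda$ in the eventual objective. Each two-sided constraint in \eqref{eq:cot_bf} is split into a pair of one-sided inequalities carrying nonnegative multipliers $\xi_k^+$ and $\xi_k^-$, and I would set $\xi_k=\xi_k^+-\xi_k^-$. Collecting the coefficients of $\boldP_{ij}$, the Lagrangian is linear in $\boldP$ with slope $\boldC_{ij}-g_j+\xi\trans B(X_i)$ plus the separable penalty $H_\lambda(\boldP_{ij})$, while the $\boldP$-independent remainder gathers the target-moment piece $-\tfrac1m\sum_j\xi\trans B(X_j)$ together with the slack terms $-\delta_k(\xi_k^++\xi_k^-)$.

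The dual is then obtained by minimizing over $\boldP$ and over the slack multipliers. The minimization over $\boldP\geq 0$ decouples across entries, and each one-dimensional subproblem is
\[
\min_{p\ge 0}\big\{H_\lambda(p)+(\boldC_{ij}-g_j+\xi\trans B(X_i))\,p\big\}=-H_\lambda^\ast\big(g_j-\xi\trans B(X_i)-\boldC_{ij}\big)
\]
by definition of the convex conjugate, which is precisely the summand appearing in \eqref{eq:COT_pen_dual}; for the entropy penalty $H_\lambda^\ast$ is the exponential and positivity of $\boldP$ is automatic, whereas for the $L_2$ penalty one takes the conjugate of the penalty restricted to $\R_+$. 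Optimizing the slack over $\xi_k^+,\xi_k^-\geq 0$ subject to $\xi_k^+-\xi_k^-=\xi_k$ gives $\xi_k^++\xi_k^-=|\xi_k|$ at the optimum, converting $-\delta_k(\xi_k^++\xi_k^-)$ into $-\sum_k\delta_k|\xi_k|$. Assembling the four pieces reproduces \eqref{eq:COT_pen_dual} verbatim.

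The step I expect to be the main obstacle is establishing strong duality, so that the stated $\max$ equals the primal $\min$ with no gap. Here I would invoke Fenchel--Rockafellar duality: the primal is convex with a linear objective, a convex separable penalty $H_\lambda$, and affine constraints, and a strictly feasible $\boldP$ (for instance a product coupling, which makes the moment discrepancies vanish when each $\delta_k>0$) supplies the constraint qualification that closes the gap, exactly as in the entropic transport duality of \citet{Peyre2019}. The only remaining subtlety is bookkeeping: tracking which support points the cost, the potential $g$, and the balancing potentials $\xi\trans B$ are evaluated at when passing to the conjugate, and confirming that interchanging the minimum over $\boldP$ with the conjugation is licensed by the separability of $H_\lambda$ and the finiteness of $H_\lambda^\ast$ on the relevant domain.
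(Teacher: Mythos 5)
Your proposal is correct and follows essentially the same route as the paper's proof: form the Lagrangian with a free multiplier $g$ for the target-marginal equalities, split each two-sided balancing constraint into a pair of nonnegative multipliers whose optimal combination yields $\sum_k \delta_k|\xi_k|$, invoke strong duality from convexity with affine/inequality constraints, and minimize entrywise over $\boldP\geq 0$ to produce the conjugate term. Your treatment is if anything slightly more careful than the paper's, which carries out the conjugation explicitly only for the $L_2$ penalty --- your observation that $H_\lambda^\ast$ must be the conjugate of the penalty restricted to $\R_+$ (giving the $(\cdot)_+$ in the $L_2$ case) and your bookkeeping of whether $\xi\trans B$ is evaluated at the source point $X_i$ (as in the paper's derivation, where the entrywise slope is $g_j - \xi\trans B(X_i) - \boldC_{ij}$) both match the proof in the paper.
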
 
	\noindent Note that the convex conjugate of $H_\lambda(x) = \lambda x \log (x) $ is $H_\lambda^\ast(y) =\exp\{(y-1)/\lambda\}$ and the convex conjugate of $H_\lambda(x) = \lambda x^2/2$ is $H_\lambda^\ast(y)=y^2/(2\lambda) $.

	\begin{proof}
		We present the proof of the dual form provided in Theorem \ref{thm:dual}. First, some tools from convex analysis \citep{Boyd2004}.
		
		\paragraph{Strong duality.} If strong duality holds then the value of the primal objective at the optimal primal solution is equal to the dual objective at the optimal dual solution.
		
		\paragraph{Slater's conditions.} Slater's conditions are that the objective function is convex and only has equality and inequality constraints .
		
		\paragraph{Slater's theorem.} If Slater's condition's hold, then strong duality holds.
		
		We are now ready to proceed.
		As a reminder, the primal optimization problem is 
		\begin{alignat*}{2}
			\cotobj{\bolda}{\lambda, z}	 = & \argmin_{\mathbf{P} \geq 0} &&  \sum_{i,j} C_{i,j} {P}_{i,j} + \lambda \frac{1}{2}  P_{i,j}^2   \\
			& \text{subject to } && \sum_{i,j} P_{i,j} \indicator(Z_i = z)   = 1\\
			&& & \boldP \trans \ones_{n} = \bolda \\
			&&& \left|\sum_{i,j} B_k(X_i) P_{i,j} -  \frac{1}{n} \sum_{j'} B_k(X_{j'}) \right|  \leq \delta_k, \, \, \forall k \in \{1,...,K\} .
		\end{alignat*}
		
		We first note that we can separate the basis function constraint into the following two inequality constraints
		\begin{align*}
			\sum_{i,j} B_k(X_i) P_{i,j} -  \frac{1}{n} \sum_{j'} B_k(X_{j'})  &< \delta_k, \\ 
			-\sum_{i,j} B_k(X_i) P_{i,j} + \frac{1}{n} \sum_{j'} B_k(X_{j'})  &< \delta_k .
		\end{align*}
		Further we combine the $k$ basis function upper bounds into one vector $\delta$ and similarly denote $B(X)$ as a $n \times k$ matrix of the basis function constraints and $\mean{B}$ as the average of the basis functions in the target population:   $\frac{1}{m} \sum_{j'} B_k(X_{j'})$.
		Then we re-write the primal problem in its Lagrangian form, defining $\langle x, y \rangle = \operatorname{tr}(x \trans y)$.
		\begin{align*}
			\mcL &= \min_{\mathbf{P} \geq 0} \max_{g,\xi_L, \xi_U} \langle \boldC, \boldP \rangle + 
			\frac{\lambda }{2} \langle \boldP, \boldP \rangle - \langle g, \boldP \trans \ones_n - \bolda \rangle +\\
			& \quad \quad \quad \langle \xi_U , B(X)\trans \boldP \ones_n - \delta  - \mean{B} \rangle + \langle \xi_L , -B(X)\trans \boldP \ones_n - \delta  + \mean{B} \rangle. \\
			\intertext{Because the primal problem contains only equality and inequality constraints and the primal objective is a convex function, then strong duality holds,}
			&=   \max_{g,\xi_L, \xi_U} \min_{\mathbf{P} \geq 0} \langle \boldC, \boldP \rangle + 
			\frac{\lambda }{2} \langle \boldP, \boldP \rangle - \langle g, \boldP \trans \ones_n - \bolda \rangle + \\
			& \quad \quad \quad \langle \xi_U , B(X)\trans \boldP \ones_n - \delta  - \mean{B} \rangle + \langle \xi_L , -B(X)\trans \boldP \ones_n - \delta  + \mean{B} \rangle  \\
			&= \max_{g,\xi_L, \xi_U}  g\trans  \bolda  - (\xi_U  + \xi_L)\trans   \delta - (\xi_U  - \xi_L)\trans  \mean{B} + \\  
			& \quad \quad \quad  \min_{\mathbf{P} \geq 0} \langle \boldC, \boldP \rangle +  \frac{\lambda }{2} \langle \boldP, \boldP \rangle  - g\trans  \boldP \trans \ones_n +(\xi_U  - \xi_L) \trans B(X) \trans \boldP \ones_n \\
			&= \max_{g,\xi}  g\trans  \bolda   -   \delta  \|\xi\|_1  -  \xi \trans \mean{B} + \\  
			& \quad \quad \quad  \min_{\mathbf{P} \geq 0} \langle \boldC, \boldP \rangle +  \frac{\lambda }{2} \langle \boldP, \boldP \rangle  - g\trans  \boldP \trans \ones_n  + \xi \trans B(X) \trans \boldP \ones_n
			\intertext{where we have combined the two positive dual variables $\xi_U$ and $\xi_L$ into one unconstrained variable. Then we can rearrange to get}
			&= \max_{g,\xi}  g\trans  \bolda  -   \delta  \|\xi\|_1  -  \xi \trans \mean{B} +  \\  
			& \quad \quad \quad  \min_{\mathbf{P} \geq 0} \langle \boldC - \ones_n g\trans    +  B(X) \xi \ones_n  \trans , \boldP \rangle +  \frac{\lambda }{2} \langle \boldP, \boldP \rangle .
		\end{align*}
		Then taking the derivative with respect to $\boldP$,
		\begin{align*}
			\nabla_\boldP \mcL  &= \boldC - \ones_n  g \trans   + B(X) \xi  \ones_n \trans + \lambda \boldP\\
			\intertext{and finding the critical point of the gradient gives}
			\boldP  &= \frac{1}{\lambda}\left( \ones_n  g \trans - B(X) \xi \ones_n \trans  - \boldC \right)_+,
		\end{align*}
		where the function $(x)_+ = \max(0,x)$ applied element-wise ensures that the weights are constrained to be positive. Plugging this back in, the objective is now
		\[
		\mcL  = \max_{g,\xi}  g\trans  \bolda  -   \delta  \|\xi\|_1  -  \xi \trans \mean{B}   -\frac{1}{2\lambda}\left( \ones_n  g \trans  - B(X) \xi \ones_n \trans  - \boldC \right)_+^2,
		\]
		as desired.
	\end{proof}
	
\end{document}